\documentclass[a4paper,english,12pt,reqno]{amsart}

\usepackage{amsmath, amsthm, amssymb}
\usepackage{a4, babel, graphicx, listings, fancyhdr, verbatim}
\usepackage{mathrsfs} 
\usepackage{anysize} 
\usepackage{natbib}
\usepackage[format = hang]{caption}                              

\usepackage{a4}
\usepackage[latin1]{inputenc}
\usepackage{amsmath, amsthm}
\usepackage{graphicx}
\usepackage{amssymb}
\usepackage{verbatim}
\usepackage{listings}
\usepackage{alltt}
\usepackage{babel}
\usepackage{mathrsfs}
\usepackage{ae}
\usepackage{natbib}

\newtheoremstyle{theorem}
{10pt} 
{10pt} 
{\sl} 
{\parindent} 
{\bf} 
{. } 
{ } 
{} 

\theoremstyle{theorem}

\newtheorem{lemma}{Lemma}

\newtheorem{assumption}{Assumption}
\newtheorem{remark}{Remark}

\def\beq{\begin{eqnarray}}
\def\eeq{\end{eqnarray}}

\def\beqn{\begin{eqnarray*}}  
\def\eeqn{\end{eqnarray*}}

\def\E{{\rm E}}

\def\dd{{\rm d}}
\def\N{{\rm N}}

\def\Pr{P}
\def\pr{{\rm pr}}

\def\quadandquad{\quad {\rm and} \quad}
\def\arr{\rightarrow}
\def\hatt{\widehat}
\def\tilda{\widetilde}
\def\sumin{\sum_{i=1}^n}

\def\eps{\varepsilon}
\def\half{\hbox{$1\over2$}}

\def\quart{\hbox{$1\over4$}}

\def\rootn{\sqrt{n}}
\def\data{{\rm data}}

\def\midd{\,|\,}
\def\tr{{\rm t}}
\def\dell{\partial}

\def\prof{{\rm prof}}
\def\KL{{\rm KL}}

\def\argmin{{\rm argmin}}
\def\argmax{{\rm argmax}}
\def\Bernstein{{Bernshte\u\i n}}
\def\true{{\rm true}}
\def\const{{\rm const.}}

\def\AIC{{\rm AIC}}
\def\BIC{{\rm BIC}}

\def\AJIC{{\rm AJIC}}
\def\BJIC{{\rm BJIC}}
\def\Tr{{\rm Tr}}

\numberwithin{equation}{section} 
\numberwithin{figure}{section}
\numberwithin{table}{section}
\usepackage{hyperref}

\title[JIC for jump regression models]{Estimation, 
   inference and model selection \\ for jump regression models}

\begin{document}


\maketitle

\centerline{\bf Steffen Gr\o nneberg$^1$, Gudmund Hermansen$^{2,3}$ 
   and Nils Lid Hjort$^3$}

\medskip 
\centerline{$^1$BI Norwegian Business School, Oslo,} 
\centerline{$^2$Norwegian Computing Centre, Oslo,} 
\centerline{$^3$Department of Mathematics, University of Oslo} 

\medskip
\centerline{\bf June 2014}

\begin{abstract}
\noindent 
We consider regression models with data of the type
$y_i=m(x_i)+\eps_i$, where the $m(x)$ curve is taken
locally constant, with unknown levels and jump points.
We investigate the large-sample properties of 
the minimum least squares estimators, finding in particular
that jump point parameters and level parameters
are estimated with respectively $n$-rate precision
and $\rootn$-rate precision, where $n$ is sample size.
Bayes solutions are investigated as well and found
to be superior. We then construct jump information criteria,
respectively AJIC and BJIC, for selecting the right
number of jump points from data. This is done by 
following the line of arguments that lead to the
Akaike and Bayesian information criteria AIC and BIC,
but which here lead to different formulae due to 
the different type of large-sample approximations involved.
\end{abstract}

\keywords{
Argmin principle, 
break points, 
discontinuities,
jump information criteria (JIC),
large-sample approximations, 
step function regression,
structural change}


\section{Introduction}
\label{section:intro}

Consider regression data $(x_i,y_i)$, with $x_i$ and $y_i$
one-dimensional. There is a long list of regression models
of the type 
\beq
\label{eq:intro1}
y_i=m(x_i,\theta)+\eps_i
   \quad {\rm for\ }i=1,\ldots,n, 
\eeq 
where the regression function $m(x,\theta)$ is 
either linear in certain basis functions, 
like for polynomial regression, or smooth in $x$,
and where the $\eps_i$ are zero-mean i.i.d.~errors
with standard deviation level $\sigma$. Here 
least squares estimators for the $\theta$ parameter 
can be worked with and leads to the familiar type of 
$\rootn$ convergence to normality, and in its turn
to inference methods, along with strategies 
for model selection of the AIC and BIC variety. 
This is the topic of Section \ref{section:smooth} below, 
encompassing also non-linear but smooth regression models.
Though that section partly contains ostensibly 
well-known material on least squares estimation in 
smooth regression models we emphasise 
that the developments there pertaining to 
a model robust AIC strategy and to refined versions
of the BIC are new. 

The main focus of this article is however estimation,
inference and model selection methods for step function 
regression models, involving discontinuity points
and levels for $m(x,\theta)$. Here the methodology
pans out differently from the smooth case, and 
we shall in particular see that there is 
$n$-convergence rather than $\rootn$-convergence
for estimators of break points and that limit 
distributions are not normal. 

For convenience and without essential loss of generality 
we take the $x$ range to be the unit interval, and 
study the model where 
\beq 
\label{eq:intro2}
m(x,\theta)=a_j \quad {\rm for\ }\gamma_{j-1}\le x<\gamma_j, 
\eeq 
for windows $[\gamma_{j-1},\gamma_j]$, with $j=1,\ldots,d$
and $\gamma_0=0,\gamma_d=1$. The unknown parameters to
estimate from data are the $d-1$ break point positions
and the $d$ levels, along with the spread parameter $\sigma$.
For an illustration, see Figure \ref{figure:figureA},
with the true $m(x)$ function of the above type
having five windows. Supposing the number of windows
being known to the statistician, the task there 
is to estimate and carry out inference for 
the five level parameters, the four break point locations,
and $\sigma$, based on such a data set. Often 
one would not know the number of windows a priori,
however, in which case inferring the right number
of break points is also part of the task. 
Our paper provides methods for handling these
questions. 


The simplest non-trivial version of this model
has two windows, with say
\beq 
\label{eq:intro3}
m(x,\theta)= \left\{\begin{matrix}a & \text{if } x\le \gamma, \\
                     b & \text{if } x>\gamma, \end{matrix} \right. 
\eeq 
and with both levels $a,b$ and break point position $\gamma$
unknown parameters. We derive the basic large-sample properties
of the least squares method in Section \ref{section:twowindowsA}; 
specifically, it is shown for the least squares estimator
$\hatt\gamma$ that $n(\hatt\gamma-\gamma_0)$ has 
a limit distribution, associated with a certain
two-sided compound Poisson process. This means a faster 
rate of convergence than in traditional parametric models. 
The situation where there is a break point, 
but where the level of $m(x)$
is not necessarily entirely flat throughout the two windows,
is also considered, in Section \ref{section:twowindowsB}.
Understanding that situation is vital not only for
constructing model robust inference methods, but 
for building model selection strategies. 
Methods and results are then generalised in 
Section \ref{section:multiwindows} 
to the case of (\ref{eq:intro2}),
i.e.~with $d$ windows and $2d$ unknown parameters.
There is again the faster $n$-rate convergence 
for estimators of the $d-1$ break points. 

\begin{figure}
\centering
\includegraphics[width = \textwidth]{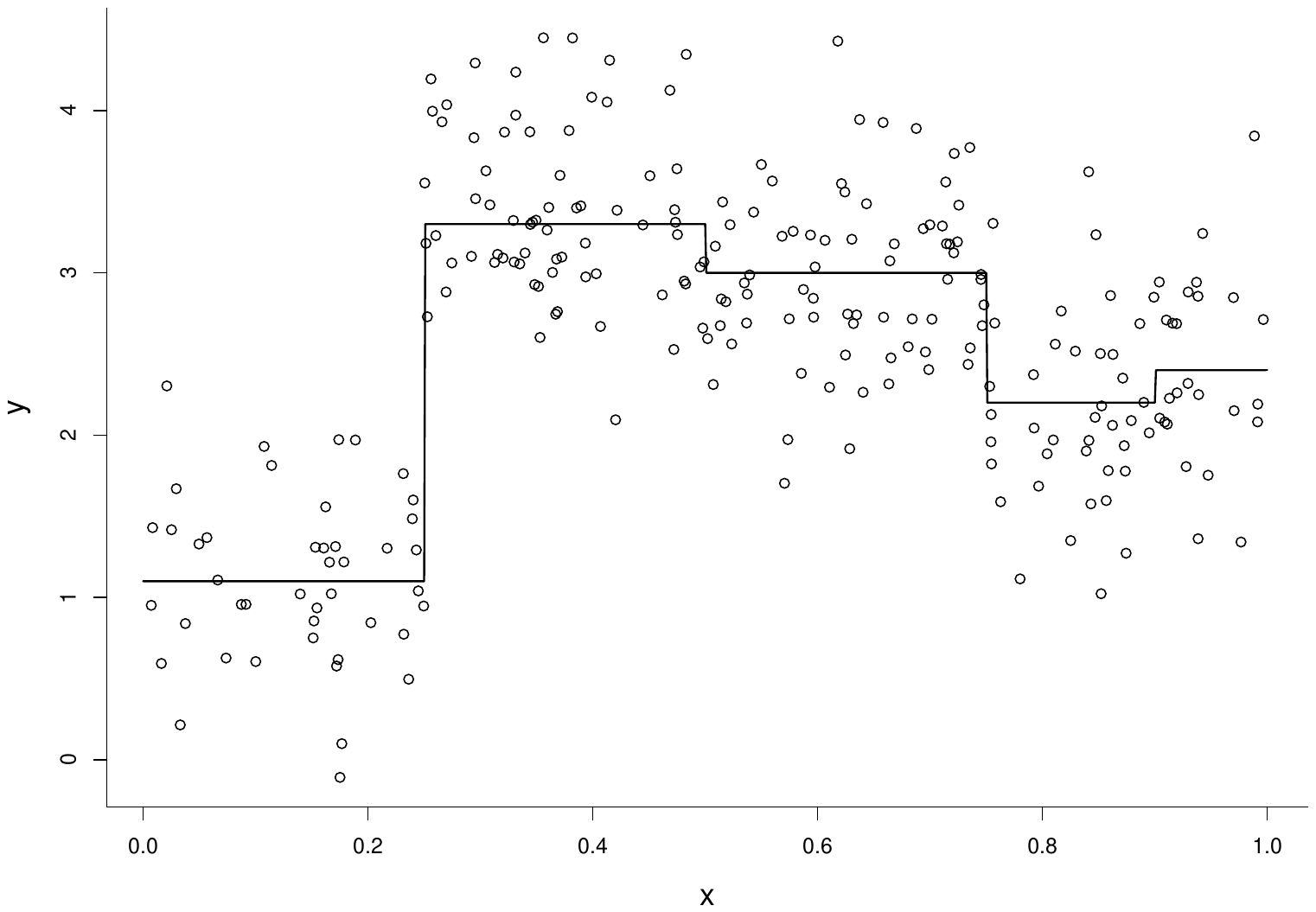}
\caption{
Illustration of the (\ref{eq:intro2}) model, with 
five windows and $n=250$ data points. The statistical 
task is to estimate the discontinuity points and 
the levels from the data, and also to infer 
the right number of windows in case this is not 
known a priori.}
\label{figure:figureA}
\end{figure} 

In some types of applications the number of windows
would be known to the analyst a priori, say $d$, 
hence limiting the statistical inference task to 
the $d-1$ break points and $d$ levels. 
Frequently the number of windows would not be known
a priori, as pointed to above.
The difficulty clearly depends on the 
sizes of windows and degree of separation between
their levels; in Figure \ref{figure:figureA} 
the small difference between the levels of 
windows four and five would be hard to identify,
unless the sample size is large. In 
Sections \ref{section:ajic} and \ref{section:bjic}
we develop jump information criteria (JIC),
respectively the AJIC and BJIC, for selecting the 
right number of windows based on data. These
are constructed by following the basic line
of arguments which for regular models lead to 
respectively the AIC and the BIC formulae,
cf.~Section \ref{section:smooth}, but which now
are in need of modification and sophistication,
in view of the different type of asymptotics at work. 

In Section \ref{section:bayes} we study the Bayesian 
approach to estimation problems in models of the 
(\ref{eq:intro2}) form, and find that the Bayes 
estimators $\hatt\gamma_{j,B}$ have the same 
fast convergence rate as for the least squares method, 
but that the limit distributions of 
$n(\hatt\gamma_{j,B}-\gamma_{0,j})$ are different. 
We also show that the Bayes estimators have 
smaller asymptotic mean squared errors. 
Then in Section \ref{section:algorithms} various 
computational aspects are addressed. We are in particular 
exhibiting a new version of previously suggested procedures
which uses dynamic programming to rapidly compute 
both least squares and Bayes estimates; the algorithms
work fast even for high $n$ and with many break points.
Illustrations of our methods and findings are given
in Section \ref{section:illustrations}, and our paper 
ends with a list of concluding remarks and suggestions 
for future work in Section \ref{section:concludingremarks}.

The themes worked with in this paper have 
quite a vast literature, and to complicate matters, 
several asymptotic perspectives exist.
 Some book length treatments 
on the subject for a single change point are 
\citet{BassevilleNikiforov93}, 
\citet{BrodskyDarkhovsky93}, 
\citet{Carlsteinetal94} and 
\citet{CsorgoHorvath97}. 
Modern treatments of a single change point for jump regression
models include Section 14.5.1 of \citet{Kosorok08} and the
application in \citet{SeijoSen11b}.

The problem of multiple change points is less studied in
the pure statistics literature, but has been thoroughly 
investigated in the econometric literature.
The asymptotic perspective in the econometric literature 
is that the sizes of the jumps shrink as the number of 
observations increases. In the one-jump case, one assumes 
that for some $0 < \alpha < 1/2$ the jump size $\lambda_n$
is such that $\lambda_n \rightarrow 0$ as 
$n \rightarrow \infty$, but that $n^{1/2-\alpha} \lambda_n
\rightarrow 0$ as $n \rightarrow \infty$.
In contrast, we will follow \citet{Kosorok08, SeijoSen11b} and
consider the covariate placements as random, 
as formalised in the upcoming Assumption \ref{assumption:xxiid}
-- but let the jump sizes be fixed. See also Remark C in 
Section \ref{section:concludingremarks}.

Two central papers on frequentistic estimation from the
econometric perspective are 
\citet{BaiPerron98} and \citet{BaiPerron03}.
In \citet{BaiPerron98}, the limit distributions in 
the presence of a prescribed number of jumps is derived using 
the shrinking jump perspective for its asymptotic approximations.
\citet{BaiPerron03}
surveys the practical aspects 
of frequentistic estimation of regression models with jumps.

A large-scale simulation study of the small-sample approximation 
of the limit law of the
estimators in the presence of multiple jumps is found in 
\citet{BaiPerron06}, where shrinking jump sizes are assumed.
They conclude that `the coverage rates 
for the break dates are adequate unless the break is either 
too small (so small as not to be detected by the tests) or 
too big'. 
Such a large-scale
simulation has to our knowledge not been conducted for
the case where the jump sizes are fixed.
It seems reasonable that some of the qualitative conclusions 
from \citet{BaiPerron06} are valid for our case as well.


We note a finite-sample approach to the Gaussian case is 
\citet{Lebarbier05}. 

On the Bayesian side, there is again a large literature 
available. 
\citet{Fearnhead06} and \citet{Chib98} 
both develop computationally feasible estimation methods. 
There is even an $\mathsf{R}$ package available, 
described in \citet{ErdmanEmerson07}, 
which also gives many useful references. 

We note that there are several distinct models that are technically
similar to the jump regression models of the present paper.
Threshold regression models is a prominent example, see
\cite{Chan93, Yu12, LiLing12}. 

The case of continuous process (signal) with discontinuities 
has been worked with, both from Bayesian and frequentist 
perspectives, in 
\citet[p.~329--338]{IbragimovKhasminski81}. 
They work with stochastic differential equations of the form
$\dd X(t) = S(t-\theta)\,\dd t + \dd B(t)$
for $t\in[0, n]$, 
where $B(t)$ is a Brownian motion process, $S(u)$ 
is a periodic deterministic function of period $1$ 
with discontinuities of the first kind, and $n$ is an integer. 


There are relatively speaking far fewer papers in the
literature seriously addressing the model selection
issues, the task of finding the right number of 
windows in the data. Some care is needed regarding
the interpretation and assessments of such methods, 
as the task of pinpointing and counting genuine 
discontinuities in the underlying signal, which is 
our perspective in this article, is different from
approximating a smooth function with a locally constant
curve, as in e.g.~\citet{BanerjeeMcKeague07} 
and \citet{HermansenHjort14a}. For general perspectives
on modelling with discontinuities, see the discussion of 
\citet{FrigessiHjort02} and the articles collected
in the associated special issue of the 
Journal of Nonparametric Statistics.
Some papers discussing model selection issues 
for jump regressions are \citet{Ninomiya05, Munketal14}. 

A recent work, also dealing with the selection of jumps is, 
\citet{Zouetal14}. They use a nonparametric maximum likelihood
approach, while we will exclusively use 
models parametrised by a finite-dimensional and fixed set.
However, their model selection technique focuses on
providing a procedure that selects the correct number of
jumps in a consistent manner as $n \rightarrow \infty$. 
Our model selection ideas are instead based on extending the
motivation behind the AIC and BIC formulae, as set up
and developed in
\citet[Chapters 2 \& 3]{ClaeskensHjort08}. 
Other papers that extend the underlying motivation of the AIC
and BIC to non-classical settings include 
\citet{HermansenHjort14a, HermansenHjort14b, GroennebergHjort14}. 

Applications of change-point models of the 
(\ref{eq:intro2}) variety abound, and include the
fields of biology, geology, engineering with e.g.~speech 
recognition, medicine, imaging, marine biology and oceanography, 
etc. A broad overview is offered in \citet{Munketal14},
including references. \citet{Kjesbuetal14} have 
extended a classic time series from \citet{Hjort1914},
pertaining to the liver quality of fish, from the
original 1880--1912 range to 1859--2012, and models used
to understand, analyse, interpret and forming predictions 
from this and similar time series for marine sciences 
include versions of (\ref{eq:intro2}). 
In this connection we also mention the so-called 
bent cable regression models of \citet{Chiuetal06},
where the mean curve is taken to be continuous
but allowed to have discontinuities in its derivative; 
\citet{Kjesbuetal14} discuss also such models for
the Hjort time series.  

\section{The case of smooth regression models}
\label{section:smooth}

It is useful to study the more traditional case 
of smooth regression curves first, before 
tendind to the jump regression models. 
Here the theory indeed works more smoothly, regarding 
both estimation, inference and model selection, 
leading in particular to formulae for AIC and BIC 
close to the familiar ones,
see e.g.~\citet{ClaeskensHjort08} for these. 
We shall see that several constructions and properties
rely on analysis of two special functions 
worked with below, the $A_n(s,t)$ process and the $B_n(s,t)$ 
function, associated respectively with 
the log-likelihood function and a Kullback--Leibler
distance function. When we turn to jump regression 
models in later sections it becomes necessary 
to understand and analyse more technically demanding 
analogues of $A_n$ and $B_n$ defined for these 
non-smooth models.

Since some of our limit distribution results 
and associated large-sample approximations depend
not only on the model formulation and the $m(x)$
function, but also on how the $x_i$ are distributed,
we put up an explicit assumption, as follows.

\begin{assumption}
\label{assumption:xxiid}
{{\rm 
The $x_i$ are generated independently from a covariate
distribution with density $f(x)$ positive on the 
unit interval. Hence ergodic averages 
$\bar h_n=n^{-1}\sumin h(x_i)=\int h(x)\,\dd F_n(x)$
converge to limits $\int h(x)f(x)\,\dd x$, 
for each bounded $h$. Here $F_n$ is the empirical 
distribution of the observed $x_1,\ldots,x_n$,
converging to $F$, the cumulative of $f$. }}
\end{assumption}

\subsection{The model and its least false parameters}

Consider then a setup in which 
\beq
\label{eq:realmodel}
y_i=m(x_i)+\eps_i \quad {\rm for\ }i=1,\dots,n, 
\eeq 
where $m(x)=m_\true(x)$ is the true regression curve 
$\E\,(Y\midd x)$ and where the $\eps_i$ are 
i.i.d.~$\N(0,\sigma_\true^2)$. 
A parametric model $m(x,\theta)$ is then suggested,
in this section taken to be smooth, with two continuous
derivatives in both $x$ and $\theta$.
Prototype cases would be the polynomial
$\theta_0+\theta_1x+\cdots+\theta_px^p$, or an
expansion in other basis functions, but also 
nonlinear models like 
e.g.~$\theta_0+\theta_1\{1-\exp(-\theta_2 x_i)\}$ 
fit the general framework. We shall first
clarify what the ML estimators are aiming at, when
we avoid postulating that the parametric model 
is correct, i.e.~$m(x)$ is not assumed to be $m(x,\theta)$
for any $\theta$. 

The log-likelihood function takes the form 
$$\ell_n(\theta,\sigma)=-n\log\sigma
   -\half(1/\sigma^2)Q_n(\theta) $$
(ignoring the immaterial constant $-\half n\log(2\pi)$ 
here and in what follows), 
where $Q_n(\theta)=\sumin \{y_i-m(x_i,\theta)\}^2$. 
We see that the ML estimator $\hatt\theta$ is 
the argmin of $Q_n$ and that the ML estimator of
$\sigma$ is given by 
$\hatt\sigma^2=Q_n(\hatt\theta)/n=Q_{n,\min}/n$. 
We have 
$$\E\,Q_n(\theta)/n=\sigma_\true^2+n^{-1}\sumin \{m(x_i)-m(x_i,\theta)\}^2, $$
and we define the least false regression parameter $\theta_{0,n}$ 
to be the parameter value minimising this expression, 
i.e.~minimising $\int \{m(x)-m(x,\theta)\}^2\,\dd F_n(x)$.
We similarly define the least false spread parameter 
$\sigma_{0,n}$ via 
\beq
\label{eq:sigmasigma}
\sigma_{0,n}^2=\sigma_\true^2
   +\int\{m(x)-m(x,\theta_{0,n})\}^2\,\dd F_n(x). 
\eeq 
These least false parameters are those at which the 
ML estimators $\hatt\theta$ and $\hatt\sigma$ are aiming,
and correspond to the best possible approximation
afforded by the parametric model to the real 
data generating mechanism (\ref{eq:realmodel}). 
When $n$ increases, $\theta_{0,n}\arr\theta_0$,
the minimiser of $\int \{m(x)-m(x,\theta)\}^2 f(x)\,\dd x$,
and $\sigma_{0,n}\arr\sigma_0$, given by 
$\sigma_0^2=\sigma_\true^2+\int\{m(x)-m(x,\theta_0)\}^2 f(x)\,\dd x$.
Note that a good model succeeds in $m(x,\theta_0)$
coming close to the real $m(x)$, hence also 
leading to a $\sigma_0$ being small and coming 
close to $\sigma_\true$ of (\ref{eq:realmodel}),
whereas a coarser model typically does not 
come close to $m(x)$ and implies a larger 
least false parameter $\sigma_0$. 

\subsection{The log-likelihood function and the $A_n$ process}

Consider now the random function
\beq
\label{eq:An}
A_n(t,u)=\ell_n(\theta_{0,n}+t/\rootn,\sigma_{0,n}+u/\rootn)
   -\ell_n(\theta_{0,n},\sigma_{0,n}). 
\eeq
It will be seen that some of the more important 
results concerning the distributions of the ML
estimators follow from large-sample properties 
of the $A_n$ function, along with natural model 
selection criteria.  
To work with $A_n$ and its distributional limit, we first need 
$$W_n=\rootn(\tilda\sigma^2/\sigma_{0,n}^2-1)\arr_d W\sim\N(0,2), $$ 
where $\tilda\sigma^2=n^{-1}\sumin\{y_i-m(x_i,\theta_{0,n})\}^2$.
Secondly, 
\beqn
H_n(t)&=&Q_n(\theta_{0,n}+t/\rootn)-Q_n(\theta_{0,n}) \\
   &=&\sumin \bigl[\{y_i-m(x_i,\theta_{0,n})-d_i\}^2
   -\{y_i-m(x_i,\theta_{0,n})\}^2\bigr] \\
   &=&\sumin \bigl[-2\{y_i-m(x_i,\theta_{0,n})\}d_i+d_i^2\bigr],
\eeqn
in which 
\beq
\label{eq:ditaylor}
\hskip-0.5truecm
d_i=m(x_i,\theta_{0,n}+t/\rootn)-m(x_i,\theta_{0,n})
   \doteq m^*(x_i,\theta_0)^\tr t/\rootn
   +\half t^\tr m^{**}(x_i,\theta_0)t/n, 
\eeq
writing $m^*(x,\theta)$ and $m^{**}(x,\theta)$
for respectively the $p$-vector and $p\times p$-matrix
of first and second order partial derivatives 
with respect to $\theta$, and where $p$ is the 
dimension of $\theta$. 
We find from this and some additional efforts that 
$H_n(t)=-2t^\tr U_n+t^\tr(\Sigma_n+\Omega_n)t+o_\pr(1)$.
Here 
$$\Sigma_n=n^{-1}\sumin m^*(x_i,\theta_{0,n})m^*(x_i,\theta_{0,n})
   \arr\Sigma=\int m^*(x,\theta_0)m^*(x,\theta_0)f(x)\,\dd x $$
and 
\beqn
\Omega_n&=&-n^{-1}\sumin \{m(x_i)-m(x_i,\theta_{0,n})\}
   m^{**}(x_i,\theta_{0,n}) \\
   & &\arr\Omega=\int \{m(x,\theta_0)-m(x)\}m^{**}(x,\theta_0)f(x)\,\dd x, 
\eeqn
whereas 
$$U_n=n^{-1/2}\sumin\{y_i-m(x_i,\theta_{0,n})\}m^*(x_i,\theta_{0,n})
   =n^{-1/2}\sumin \eps_im^*(x_i,\theta_{0,n}). $$
We use here the fact that 
\beq
\label{eq:productiszero}
\sumin\{m(x_i)-m(x_i,\theta_{0,n})\}m^*(x_i,\theta_{0,n})=0, 
\eeq 
which follows from the definition of the least false parameter 
$\theta_{0,n}$. This then leads to 
$U_n\arr_d U\sim\N(0,\sigma_\true^2\Sigma)$. 
There is also joint convergence $(W_n,U_n)\arr_d(W,U)$,
with $W$ and $U$ found to be independent,
again thanks to (\ref{eq:productiszero}). It follows that 
$$H_n(t)\arr_d H(t)=-2t^\tr U+t^\tr(\Sigma+\Omega)t. $$
We note that $\Omega$ is zero or small 
if the parametric model is correct or nearly correct
or if $m(x,\theta)$ is linear or nearly linear in $\theta$.
For linear regression models, for example, $\Omega_n$
and hence $\Omega$ is zero.

Combining these ingredients one may now establish 
the limit process for $A_n$ of (\ref{eq:An}). We have 
\beqn
A_n(t,u)&=&-n\log(\sigma_{0,n}+u/\rootn)
   -\half{1\over \sigma_{0,n}^2}{1\over \{1+(u/\sigma_{0,n})/\rootn\}^2}
   \{Q_n(\theta_{0,n})+H_n(t)\} \\
& &   +n\log\sigma_{0,n}+\half{1\over \sigma_{0,n}^2}Q_n(\theta_{0,n}) \\
&=&-n\log\{1+(u/\sigma_{0,n})/\rootn\}
   -\half{1\over \sigma_{0,n}^2}
   \Bigl[{1\over \{1+(u/\sigma_{0,n})/\rootn\}^2}-1\Bigr]
   Q_n(\theta_{0,n}) \\
& &  -\half{1\over \sigma_{0,n}^2}
   {1\over (1+(u/\sigma_{0,n})/\rootn)^2}H_n(t). 
\eeqn
This leads with some further details and efforts to 
\beq
\label{eq:AntoA}
\begin{array}{rcl} 
A_n(t,u)&=&{u\over \sigma_0}W_n-{u^2\over \sigma_0^2}
   -\half{1\over \sigma_0^2}H_n(t)+o_\pr(1) \displaystyle \\ 
&\arr_d& A(t,u)
   ={u\over \sigma_0}W-{u^2\over \sigma_0^2}
   +{1\over \sigma_0^2}\{t^\tr U-\half t^\tr(\Sigma+\Omega)t\}. 
   \displaystyle  
\end{array}
\eeq 
The limit function depends on $(W,U)$ only, 
and, after securing tightness -- which we 
will not do here -- 
the convergence in distribution here 
takes place not merely pointwise, 
i.e.~for each $(t,u)$, but inside each function space 
$D([-r,r]^{p+1})$, for $r$ positive, equipped with
the Skorokhod topology for right-continuous 
functions with left-hand limits; 
see \citet{Billingsley68} for mathematical details
concerning convergence of probability measures
in such function spaces. 
 The point, in our connection,
is that $A_n\arr_d A$ automatically implies 
$h(A_n)\arr_d h(A)$ for each $h$ continuous 
on the support of $A$. Various useful 
corollaries hence flow from (\ref{eq:AntoA}),
as we now demonstrate. 
 

(i) Using the argmax continuous mapping theorem as described
e.g.~in \citet[Chapter 3.2.1]{vdVaartWellner96}, we 
get that if $\rootn(\hatt\theta-\theta_{0,n})$ is 
bounded in probability then $\argmax(A_n)\arr_d\argmax(A)$: 
\beq
\label{eq:thatuhat}
\begin{array}{rcl}
\hatt t_n=\rootn(\hatt\theta-\theta_{0,n})
   &\arr_d&
   \hatt t=(\Sigma+\Omega)^{-1}U, \displaystyle \\
\hatt u_n=\rootn(\hatt\sigma-\sigma_{0,n})
   &\arr_d&
   \hatt u=\half\sigma_0 W. \displaystyle
\end{array}
\eeq
Also, these maximisers of the random limit function $A$ 
have distributions 
$$\hatt t\sim\N_p(0,\sigma_\true^2(\Sigma+\Omega)^{-1}
   \Sigma(\Sigma+\Omega)^{-1})
   \quadandquad \hatt u\sim\N(0,\half\sigma_0^2). $$ 
Note that we do not provide sufficient conditions for
$\rootn(\hatt\theta-\theta_{0,n}) = O_P(1)$, as this
is a standard problem investigated for example 
in \citet[Chapters 7 \& 9]{van2000empirical}.

(ii) Consider 
$$A_n^*(t,u)=A_n(\hatt t_n+t,\hatt u_n+u)-A_n(\hatt t_n,\hatt u_n)
   =\ell_n(\hatt\theta+t/\rootn,\hatt\sigma+u/\rootn)
   -\ell_n(\hatt\theta,\hatt\sigma). $$
We have 
\beq
\label{eq:AnstartoAstar}
A_n^*(t,u)\arr_d A^*(t,u)=A(\hatt t+t,\hatt u+u)-A(\hatt t,\hatt u)
   =-u^2-\half {1\over \sigma_0^2}t^\tr(\Sigma+\Omega)t. 
\eeq 
This relates to the fact that there is joint convergence
in distribution $(A_n,\hatt t_n)\arr_d (A,\hatt t)$, and
that the composite function taking $(A,t)$ to $A(t)$
is continuous. 

(iii) We also have 
$$A_n(\hatt t_n,\hatt u_n)=\ell_n(\hatt\theta,\hatt\sigma)
   -\ell_n(\theta_{0,n},\sigma_{0,n})
   \arr_d A(\hatt t,\hatt u)
   =\quart W^2+\half(1/\sigma_0^2)U^\tr(\Sigma+\Omega)^{-1}U. $$

\subsection{The Kullback--Leibler distance and the $B_n$ function}

The AIC type model selection strategy, worked with
below, is intimately connected to the Kullback--Leibler
distance from the true data generating mechanism
$g(y\midd x)$, here associated with (\ref{eq:realmodel}),
and the estimated parametric model, say 
$f(y\midd x,\hatt\theta,\hatt\sigma)$.
This distance, when averaged over the cases 
$x_1,\ldots,x_n$, is 
\beqn
\KL_n=\int\int g(y\midd x)\log{g(y\midd x)\over 
   f(y\midd x,\hatt\theta,\hatt\sigma)}
   \,\dd y\,\dd F_n(x)
   =\const-n^{-1}k_n(\hatt\theta,\hatt\sigma),
\eeqn 
in which 
\beqn
k_n(\theta,\sigma)
   &=&\sumin\int g_\true(y\midd x_i)
   \log f(y\midd x_i,\theta,\sigma)\,\dd y \\
   &=&-n\log\sigma-\half{1\over \sigma^2}
   \sumin [\{m(x_i)-m(x_i,\theta)\}^2+\sigma_\true^2]. 
\eeqn 
The best approximation or least false parameters
are set up such that they maximise this function,
hence with partial derivatives being zero at 
$(\theta_{0,n},\sigma_{0,n})$. This leads to 
\beqn
B_n(t,u)=k_n(\theta_{0,n}+t/\rootn,\sigma_{0,n}+u/\rootn)
   -k_n(\theta_{0,n},\sigma_{0,n})
   =-\half v^\tr J_nv+o(1), 
\eeqn
writing $v=(t,u)^\tr$, and where 
$$J_n=-n^{-1}{\dell^2 k_n(\theta_{0,n},\sigma_{0,n})
   \over \dell\alpha\dell\alpha^\tr}, $$
writing $\alpha=(\theta,\sigma)$ for the full parameter 
vector. Under the conditions above, we have 
\beqn
J_n\arr J={1\over \sigma_0^2}
\begin{pmatrix}
\Sigma+\Omega, & 0 \\
0, & 2
\end{pmatrix}, 
\eeqn
and $B_n(t,u)\arr B(t,u)=-\half v^\tr Jv$.

\subsection{The model robust AIC}

The model selection strategy underlying the original 
AIC, see \citet[Ch.~3]{ClaeskensHjort08}, is 
in essence to estimate the Kullback--Leibler distance 
from truth to each estimated model and then choosing
the model with smallest such estimated distance.
In view of the above this entails estimating  
$$\hatt k_n=k_n(\hatt\theta,\hatt\sigma)
   =-n\log\hatt\sigma-\half{1\over \hatt\sigma^2}
   \sumin [\{m(x_i)-m(x_i,\hatt\theta)\}^2+\sigma_\true^2] $$
from data, for each candidate model, and then choose
the model with the highest such estimate. 
To do this we start with 
$$\ell_{n,\max}=\ell_n(\hatt\theta,\hatt\sigma)
   =-n\log\hatt\sigma-\half n $$
and investigate to what extent this tends to overshoot
its target $k_n(\hatt\theta,\hatt\sigma)$. 

Write 
\beqn
\ell_{n,\max}
   &=&\ell_n(\theta_{0,n},\sigma_{0,n})
   +\{\ell_n(\hatt\theta,\hatt\sigma)-\ell_n(\theta_{0,n},\sigma_{0,n})\}
   =\ell_n(\theta_{0,n},\sigma_{0,n})+E_{n,1}, \\ 
\hatt k_n
   &=&k_n(\theta_{0,n},\sigma_{0,n})
   +\{k_n(\hatt\theta,\hatt\sigma)-k_n(\theta_{0,n},\sigma_{0,n})\}
   =k_n(\theta_{0,n},\sigma_{0,n})+E_{n,2}. 
\eeqn
The point here is that 
$\E\,\ell_n(\theta_{0,n},\sigma_{0,n})=k_n(\theta_{0,n},\sigma_{0,n})$,
i.e.~that part of $\ell_{n,\max}$ is unbiased for the corresponding
part of $\hatt k_n$, whereas $E_{n,1}$ is positive and 
$E_{n,2}$ is negative, contributing both to a certain 
positive bias for $\ell_{n,\max}$. We first find 
$$E_{n,1}=A_n(\hatt t_n,\hatt u_n)\arr_d 
   E_1=A(\hatt t,\hatt u)
   =\quart W^2+\half(1/\sigma_0^2)U^\tr(\Sigma+\Omega)^{-1}U, $$
with $(\hatt t_n,\hatt u_n)$ and $(\hatt t,\hatt u)$
as in (\ref{eq:thatuhat}). Furthermore, 
$$E_{n,2}=B_n(\hatt t_n,\hatt u_n)\arr_d
   E_2=B(\hatt t,\hatt u), $$
and one finds in fact that 
$(E_{n,1},E_{n,2})\arr_d (E_1,E_2)$ jointly, with $E_2=-E_1$. So 
$$E_{n,1}-E_{n,2}\arr_d E_1-E_2
   =\half W^2+(1/\sigma_0^2)U^\tr(\Sigma+\Omega)^{-1}U, $$
which has the following mean value, 
the overshooting bias of $\ell_{n,\max}$ for estimating
$k_n(\hatt\theta,\hatt\sigma)$: 
$$b=\E\,(E_1-E_2)=1+(\sigma_\true^2/\sigma_0^2)
   \Tr\{(\Sigma+\Omega)^{-1}\Sigma\}. $$
This leads to 
\beq
\label{eq:aicstarsmooth}
\AIC^*=2\ell_{n,\max}-2\hatt b
   =2\ell_{n,\max}-2\bigl[1+(\hatt\sigma^2/\hatt\sigma_0^2)
   \Tr\{(\hatt\Sigma+\hatt\Omega)^{-1}\hatt\Sigma\}\bigr],
\eeq
where $\hatt\sigma^2$ is an estimate of $\sigma_\true^2$. 
We call this the model robust AIC. 

We note that the $\AIC^*$ developed here is not identical
to the traditional AIC, which merely counts estimated 
parameters and here would be equal to 
$\AIC=2\ell_{n,\max}-2(1+p)$. Our version 
(\ref{eq:aicstarsmooth}) is model robust in that we 
are not assuming the parametric model to be correct 
when we assess and estimate the Kullback--Leibler
deviances. For the case of comparing and assessing 
linear regressions models the above simplifies,
in that $\Omega=0$, leading to 
$$\AIC^*=2\ell_{n,\max}-2(1+p\hatt\sigma^2/\hatt\sigma_0^2). $$
Again, $\hatt\sigma_0$ stems from the candidate 
model being assessed, whereas $\hatt\sigma$ 
estimates the underlying $\sigma_\true$. We might use
$\hatt\sigma$ from the biggest model under consideration,
or take $n^{-1}\sumin\{y_i-\tilda m(x_i)\}^2$ with
a nonparametric smoother for $\tilda m(x)$. 

We also record that our $\AIC^*$ is
equivalent to choosing the candidate model with the 
smallest value of 
$$n\log\hatt\sigma_0+(\hatt\sigma^2/\hatt\sigma_0^2)
   \Tr\{(\hatt\Sigma+\hatt\Omega)^{-1}\hatt\Sigma\}. $$
This is also related to Mallows's $C_p$, but in
fact a more robust version, and also encompassing
the possibility of comparing linear with nonlinear
regressions. 

\subsection{The BIC}

Suppose a Bayesian prior $\pi(\theta,\sigma)$ is 
put up for the parameters of the model. We may then 
use (\ref{eq:AnstartoAstar}) to approximate the 
marginal likelihood of the data, with implications
for posterior model computations, as we shall see. 
We find 
\beqn
\lambda_n&=&\int\pi(\theta,\sigma)L_n(\theta,\sigma)
   \,\dd\theta\,\dd\sigma \\
   &=&\exp(\ell_{n,\max})\int\exp\{\ell_n(\theta,\sigma)
   -\ell_n(\hatt\theta,\hatt\sigma)\}
   \pi(\theta,\sigma)\,\dd\theta\,\dd\sigma
\eeqn 
Set now $\theta=\hatt\theta+t/\rootn$ and 
$\sigma=\hatt\sigma+u/\rootn$, which is seen to 
lead to involving $A_n^*(t,u)$ above, and 
\beqn
\lambda_n&=&\exp(\ell_{n,\max})
   \int\exp\{A_n^*(t,u)\}
   \pi(\hatt\theta+t/\rootn,\hatt\sigma+u/\rootn)
   \,\dd t\,\dd u\,(1/\rootn)^{p+1} 
\eeqn 
This leads to the approximation 
$$\log\lambda_n\doteq\ell_{n,\max} -\half(p+1)\log n
   +\log\int\exp\{A^*(t,u)\}\,\dd t\,\dd u
   +\log\pi(\hatt\theta,\hatt\sigma). $$
The integral here may be evaluated exactly, 
in view of (\ref{eq:AnstartoAstar}), giving
$$\int\exp\{A^*(t,u)\}\,\dd t\,\dd u
   =(2\pi)^{(p+1)/2}2^{-1/2}|(\Sigma+\Omega)/\sigma_0^2|^{-1/2} $$
and its turn
\beq
\label{eq:loglambda}
\begin{array}{rcl}
\log\lambda_n&\doteq&\ell_{n,\max}-\half(p+1)\log n \\
& & \qquad 
   +\half(p+1)\log(2\pi)+p\log\sigma_0
   -\half\log|\Sigma+\Omega|
   +\log\pi(\hatt\theta,\hatt\sigma).
\end{array}
\eeq 

Assume now that a list of candidate models 
$\mathcal{M}_1,\ldots,\mathcal{M}_k$ are under 
consideration, along with
priors $\pi(\theta_j,\sigma_j)$ for the parameters 
$(\theta_j,\sigma_j)$ involved in model $\mathcal{M}_j$,
and finally with prior probabilities $p(\mathcal{M}_j)$ 
to make the Bayesian multi-model description complete.
The posterior model probabilities are then 
$p(\mathcal{M}_j\midd\data)\propto 
p(\mathcal{M}_j)\lambda_{n,j}$, with 
$\lambda_{n,j}$ the marginal likelihood computed
under $\mathcal{M}_j$, defined and computed as above. 
We then have 
$$p(\mathcal{M}_j\midd\data)\doteq
  {p(\mathcal{M}_j)\exp\{\ell_{n,j,\max}-\half(p_j+1)\log n+D_{n,j}\}
  \over 
  \sum_{j'=1}^k
  p(\mathcal{M}_{j'})\exp\{\ell_{n,j',\max}-\half(p_{j'}+1)\log n+D_{n,j'}\}}, $$
with $p_j$ the dimension of $\theta_j$ and 
$D_{n,j}$ the relevant extra terms, from the 
approximation developed above. Since these extra
terms are of lower order, in fact of order $O(p)$
compared to order $O_\pr(n)$ and $O(p\log n)$ 
for the two leading terms, the approximation 
$$p(M_j\midd\data)\propto p(\mathcal{M}_j) 
\exp(\half\,\BIC_j) $$
is a practical one, with 
\beq
\label{eq:bicforsmooth}
\BIC_j=2\ell_{n,j,\max}-(p_j+1)\log n. 
\eeq
This is, in essence, the rationale underlying 
the BIC model selection criterion; 
cf.~\citet[Ch.~4]{ClaeskensHjort08}. 
One may of course attempt to use further
aspects of the above development when
approximating $p(\mathcal{M}_j\midd\data)$, 
e.g.~using $(p_j+1)\log\{n/(2\pi)\}$ instead of 
$(p_j+1)\log n$ as suggested by (\ref{eq:loglambda}), 
but this would also need an assessment of 
the actual size of the 
$\log\pi(\hatt\theta,\hatt\sigma)$ term. 
The common practice stemming from this methodology 
is to sidestep these lower-level issues, and also
to take all prior model probabilities $p(\mathcal{M}_j)$
equal; then selecting the most likely model
given the data is equivalent to choosing the
candidate model with highest $\BIC$ score. 

\section{The two-window case: a single discontinuity}
\label{section:twowindowsA}

In this section we focus on the case of a single breakpoint 
parameter $\gamma$ and denote the corresponding level parameters 
$a$ and $b$, to the left and right of $\gamma$, respectively.
The model hence takes the form 
\beq 
\label{eq:twowindowmodel}
y_i=m(x_i,\theta)+\eps_i
  = \left\{\begin{matrix}
   a+\eps_i & \text{if } x\le \gamma, \\
   b+\eps_i & \text{if } x>\gamma, \end{matrix} \right.
\eeq 
writing $\theta=(\gamma,a,b)$ for the full regression
function parameter. As in the previous section
we take the $\eps_i$ to be i.i.d.~$\N(0,\sigma^2)$,
and view the $x_i$ to be coming from a covariate 
distribution with density $f$ with cumulative $F$. 
In this section we work under the model conditions 
given above, where there are underlying true parameters 
$\theta_0=(\gamma_0,a_0,b_0)$ and $\sigma_0$, with 
$\gamma_0$ an inner point of $(0,1)$ and $a_0\not=b_0$.
To develop model robust inference methods and 
for constructing model selection criteria it is 
however also necessary to examine properties of 
estimators outside these model conditions, i.e.~when
the regression function $m_\true(x)$ is not necessarily 
of the precise form implicit in (\ref{eq:twowindowmodel}).
Such an extended analysis is given in 
Section \ref{section:twowindowsB} below.


As for the case examined in Section \ref{section:smooth},
the log-likelihood function is 
$\ell_n(\theta,\sigma)=-n\log\sigma-Q_n(\theta)/\sigma^2$,
so the ML estimators 
$\hatt\theta=(\hatt\gamma,\hatt a,\hatt b)$ 
for are those that minimise
\beq
\label{eq:Qnfortwo}
Q_n(\theta)=\sumin\{y_i-m(x_i,\theta)\}^2
   =\sum_{x_i\le\gamma}(y_i-a)^2+\sum_{x_i>\gamma}(y_i-b)^2, 
\eeq 
i.e.~identical to the least squares estimators.
The numerical recipe is clear enough, since 
$$\hatt a(\gamma)=\bar y_L={\sum_{x_i\le\gamma}y_i
   \over \sum_{x_i\le\gamma}1}
  \quadandquad
  \hatt b(\gamma)=\bar y_R={\sum_{x_i>\gamma}y_i
   \over \sum_{x_i>\gamma}1}, $$
the averages of $y_i$ values to the left and to the right, 
respectively, minimise the function for given $\gamma$. 
This reduces the problem to the computation of 
a one-dimensional curve 
$$Q_n(\gamma,\hatt a(\gamma),\hatt b(\gamma))
   =\sum_{x_i\le\gamma}\{y_i-\hatt a(\gamma)\}^2
   +\sum_{x_i>\gamma}\{y_i-\hatt b(\gamma)\}^2
   =V_{n,L}(\gamma)+V_{n,R}(\gamma), $$
say, splitting into a sum of the variation to the left 
and to the right of $\gamma$. 
This function is constant inside intervals spanned
by consecutive $x_i$ values (i.e.~inside each $(x_{i-1},x_i)$,
if the $x_i$ are ordered), and we define $\hatt\gamma$ 
to be the midpoint inside the relevant interval where
$Q_n$ is smallest. The sizes of these intervals tend
uniformly to zero as sample size increases,
as a consequence of Assumption \ref{assumption:xxiid}, 
so in the limit there is no ambiguity. 
We also note that an algebraic reformulation shows that 
minimising $V_{n,L}(\gamma)+V_{n,R}(\gamma)$ above is 
equivalent to the numerically faster recipe of 
maximising 
$$n_L\hatt a(\gamma)^2+n_R\hatt b(\gamma)^2, $$ 
with $n_L=nF_n(\gamma)$ the number of points to the left 
of $\gamma$ and $n_R=n-n_L$ the number falling to the right. 

Write now 
$$Q_n(\theta)=\sumin\{m(x_i,\theta_0)-m(x_i,\theta)+\eps_i\}^2
   =Q_{n,0}+R_n(\theta)-2V_n(\theta), $$
say, with $Q_{n,0}=\sumin\eps_i^2$ not depending on $\theta$ and with 
\beqn
R_n(\theta)
   =\sumin\{m(x_i,\theta)-m(x_i,\theta_0)\}^2,\quad 
V_n(\theta)
   =\sumin\{m(x_i,\theta)-m(x_i,\theta_0)\}\eps_i. 
\eeqn
It will prove useful to work with the random function
\beq
\label{eq:Hn}
H_n(s,t_1,t_2)=Q_n(\gamma_0+s/n,a_0+t_1/\rootn,b_0+t_2/\rootn)
   -Q_n(\gamma_0,a_0,b_0). 
\eeq 
The reason for the different scales here is that 
it will turn out that $\hatt\gamma$ tends to $\gamma_0$ 
at a faster rate (namely $n$) than do $\hatt a$ and 
$\hatt b$ to $a_0$ and $b_0$ (namely $\rootn$). 
An algebraic reformulation, via $Q_n=Q_{n,0}+R_n-2V_n$ 
above, gives 
$$H_n(s,t_1,t_2)=C_n(s,t_1,t_2)-2D_n(s,t_1,t_2), $$
with 
\begin{align*}
C_n(s,t_1,t_2)
   &=\sumin \{m(x_i,\gamma_0+s/n,a_0+t_1/\rootn,b_0+t_2/\rootn)
   -m(x_i,\gamma_0,a_0,b_0)\}^2, \cr
D_n(s,t_1,t_2)
   &=\sumin \{m(x_i,\gamma_0+s/n,a_0+t_1/\rootn,b_0+t_2/\rootn)
   -m(x_i,\gamma_0,a_0,b_0)\}\eps_i.  
\end{align*}

The following lemma spells out the limiting behaviour of 
$C_n$ and $D_n$, and hence of $H_n$. For this we first need
to describe a certain two-sided compound Poisson process.
For an intensity parameter $\lambda$, let 
$N^*(\lambda,s)$ be a two-sided Poisson process, 
i.e.~equal to a Poisson process $N(s)$ for $s\ge0$
and to another and independent Poisson process $N'(|s|)$
for $s\le0$, both with parameter $\lambda$. 
Next, for $s\ge0$ define 
$$W^*(\lambda,s)=\sum_{i=1}^{N(s)}U_i, $$
with the $U_i$ independent and standard normal
and independent of $N_1$, and similarly 
$W^*(\lambda,s)=\sum_{i=1}^{N'(|s|)}U_i'$
for $s\le0$, with this second set of $U_i'$
also being independent and standard normal
and independent of variables employed for positive $s$. 
We note that $W^*(\lambda,s)$ has mean zero, 
variance $\lambda|s|$, a point mass at zero of 
size $\exp(-\lambda |s|)$, and a somewhat heavier kurtosis
than a two-sided Wiener process with the same variance; 
$$\E \{W^*(\lambda,s)/\sqrt{\lambda|s|}\}^4=3+3/(\lambda|s|). $$

The point is now that $N^*(\lambda,s)$ and $W^*(\lambda,s)$ 
as described here appear as limits of natural 
processes inside our framework, and with $\lambda=f(\gamma_0)$.
First let $s>0$ and consider $N_n(s)$, the number of $x_i$ 
in $[\gamma_0,\gamma_0+s/n]$. It is a binomial $(n,q_n(s))$,
say, with $q_n(s)=F(\gamma_0,\gamma_0+s/n)$. We see
that $N_n(s)\arr_d N(s)$, a Poisson process with intensity
$f(\gamma_0)$, and similarly for $s$ negative. 
Furthermore, the process 
\begin{equation*}
W_n(s)=
\begin{cases}
\sum_{\gamma_0<x_i\le\gamma_0+s/n}\eps_i &\mbox{if\ } s>0, \\
\sum_{\gamma_0+s/n\le x_i<\gamma_0}\eps_i &\mbox{if\ } s<0, 
\end{cases} 
\end{equation*}
converges in distribution to a scaled version 
of such a two-sided compound Poisson process,
say $\sigma_0\,W^*(f(\gamma_0),s)$. 
Proving this is not difficult, e.g.~via 
moment-generating functions. 

\begin{lemma}
\label{lemma:HntoHundermodel}
There is process convergence
$$H_n(s,t_1,t_2)\arr_d H(s,t_1,t_2)=C(s,t_1,t_2)-2D(s,t_1,t_2), $$
where 
\begin{align*}
C(s,t_1,t_2)&=(b_0-a_0)^2N^*(f(\gamma_0),s)
   +F(\gamma_0)t_1^2+\{1-F(\gamma_0)\}t_2^2, \cr
D(s,t_1,t_2)&=\sigma_0|b_0-a_0|W^*(f(\gamma_0),s)
   +V_1t_1+V_2t_2, 
\end{align*}
where $V_1$ and $V_2$ are independent zero-mean normals
with variances $\sigma_0^2F(\gamma_0)$ and 
$\sigma_0^2\{1-F(\gamma_0)\}$ and also independent of 
$N^*(f(\gamma_0),s)$ and $W^*(f(\gamma_0),s)$, introduced above. 
The convergence takes place in the Skorokhod function space
$D([-r,r]^3)$, for each positive~$r$.
\end{lemma}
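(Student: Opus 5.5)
We decompose $C_n$ and $D_n$ according to where $x_i$ falls relative to $\gamma_0$ and $\gamma_0+s/n$. Take $s>0$; the case $s<0$ is symmetric. The difference $m(x_i,\gamma_0+s/n,a_0+t_1/\rootn,b_0+t_2/\rootn)-m(x_i,\theta_0)$ takes three values. For $x_i\le\gamma_0$ it equals $t_1/\rootn$. For $\gamma_0<x_i\le\gamma_0+s/n$ it equals $a_0-b_0+t_1/\rootn$. For $x_i>\gamma_0+s/n$ it equals $t_2/\rootn$. Hence
\begin{align*}
C_n&=(t_1^2/n)\,nF_n(\gamma_0)+(a_0-b_0+t_1/\rootn)^2N_n(s)+(t_2^2/n)\{n-nF_n(\gamma_0)-N_n(s)\},\\
D_n&=t_1V_{n,1}+(a_0-b_0+t_1/\rootn)W_n(s)+t_2V_{n,2}-(t_2/\rootn)W_n(s),
\end{align*}
with $V_{n,1}=n^{-1/2}\sum_{x_i\le\gamma_0}\eps_i$ and $V_{n,2}=n^{-1/2}\sum_{x_i>\gamma_0}\eps_i$.

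By Assumption \ref{assumption:xxiid}, $F_n(\gamma_0)\arr F(\gamma_0)$. Since $\sup_{|s|\le r}N_n(s)$ and $\sup_{|s|\le r}|W_n(s)|$ are $O_\pr(1)$, every term with an extra $n^{-1/2}$ or $n^{-1}$ vanishes uniformly on $[-r,r]^3$. This reduces $C_n$ to $(b_0-a_0)^2N_n(s)+F(\gamma_0)t_1^2+\{1-F(\gamma_0)\}t_2^2+o_\pr(1)$, and $D_n$ to $(a_0-b_0)W_n(s)+t_1V_{n,1}+t_2V_{n,2}+o_\pr(1)$, uniformly. The sign $(a_0-b_0)$ versus $|b_0-a_0|$ is immaterial, since $W^*$ is symmetric in distribution, jointly with $N^*$, because the $U_i$ are symmetric. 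Note that $N_n$ and $W_n$ are defined on each side of $\gamma_0$, so for $s<0$ the $t_1$ and $t_2$ roles swap in the middle term, but that term again contributes only $o_\pr(1)$.

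The remaining task is joint weak convergence of $(N_n,W_n,V_{n,1},V_{n,2})$ to $(N^*,\sigma_0W^*,V_1,V_2)$ with the stated independence. Finite-dimensional convergence follows from a joint moment generating function (or characteristic function) computation. Condition on the $x_i$; the $\eps_i$ are Gaussian, so $W_n$ at finitely many $s$-points, together with $V_{n,1}$ and $V_{n,2}$, is conditionally Gaussian. The $V_{n,j}$ are sums over $O(n)$ points scaled by $n^{-1/2}$. They share only the $O_\pr(1)$ points in $[\gamma_0-r/n,\gamma_0+r/n]$ with $W_n$, so their covariance with $W_n$ is $O_\pr(n^{-1/2})\arr0$, and their conditional variances converge to $\sigma_0^2F(\gamma_0)$ and $\sigma_0^2\{1-F(\gamma_0)\}$. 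Asymptotically $V_{n,1}$ and $V_{n,2}$ are therefore independent normals, independent of the local processes. The increments of $N_n$ over disjoint small intervals are multinomial, and they converge to independent Poisson variables with means $f(\gamma_0)|s_k-s_{k-1}|$ by the standard binomial–Poisson limit, using continuity of $f$ at $\gamma_0$. Given the counts, the $W_n$ increments are sums of that many i.i.d.\ $\N(0,\sigma_0^2)$ variables, which is exactly the compound Poisson structure of $\sigma_0W^*$.

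The main obstacle is tightness in $D([-r,r]^3)$. Our plan is to exploit the structure: the $(t_1,t_2)$-dependence of the limit is polynomial with random coefficients. So the processes are continuous functionals of the finite-dimensional vector $(F_n(\gamma_0),V_{n,1},V_{n,2})$ and of the one-parameter processes $N_n(\cdot)$ and $W_n(\cdot)$ on $[-r,r]$. It thus suffices to show that $(N_n,W_n)$ is tight in $D[-r,r]$. These are pure-jump processes with finitely many jumps in the window, and the jump locations $n(x_i-\gamma_0)$ converge as a point process to a Poisson process. Tightness follows from the criterion in \citet{Billingsley68}, via the moment bound $\E\{|N_n(s)-N_n(s_1)|\,|N_n(s_2)-N_n(s)|\}\le C(s_2-s_1)^2$, and similarly for $W_n$ by conditioning on the $x_i$. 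Alternatively, one can represent $(N_n,W_n)$ as a continuous functional of the marked empirical point process $\sum_i\delta_{(n(x_i-\gamma_0),\eps_i)}$ restricted to $[-r,r]$. That point process converges to a marked Poisson process, and the map is continuous at limits with no atoms at the endpoints.
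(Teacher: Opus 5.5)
Your proposal is correct and follows the same route as the paper's sketch: split $C_n$ and $D_n$ according to whether $x_i$ lies left of $\gamma_0$, between $\gamma_0$ and $\gamma_0+s/n$, or beyond, and pass to the limit term by term. You go beyond the paper in detail rather than method: you write the far-side noise sum as $\rootn V_{n,2}-W_n(s)$ so the level terms do not depend on $s$, you use the $O_\pr(n^{-1/2})$ conditional-covariance argument where the paper simply asserts independence, and you sketch the tightness step the paper explicitly omits (one small point: for $s<0$ the coefficient of $W_n$ is $b_0-a_0$, so the sign flip has to be applied to the two halves of $W^*$ separately, which is legitimate because they are independent).
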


Note that we only sketch the proof, and do not provide a formal derivation of the tightness required for full process convergence.

\begin{proof}[Sketch of proof]
We first demonstrate that $C_n$ tends to the $C$ function.
To this end, consider the $s>0$ case first. Then 
\begin{align*}
m(x_i,\gamma_0+s/n,a_0&+t_1/\rootn,b_0+t_2/\rootn)
    -m(x_i,\gamma_0,a_0,b_0) \cr
   &=\left\{ \begin{matrix} a_0+t_1/\rootn-a_0 
      &\text{if } x_i\le\gamma_0, \cr
  a_0+t_1/\rootn-b_0 &\text{if } \gamma_0<x_i\le\gamma_0+s/n, \cr
  b_0+t_2/\rootn-b_0 & \text{if } x_i>\gamma_0+s/n. \end{matrix} \right.
\end{align*}
It follows, using $F_n$ for the empirical cumulative distribution
function of the $x_i$ points, and recognising 
$N_n(s)=nF_n(\gamma_0,\gamma_0+s/n)$ from above, that 
\begin{align*}
C_n(s,t_1,t_2)
&=nF_n(\gamma_0)t_1^2/n
   +(a_0-b_0-t_1/\rootn)^2nF_n(\gamma_0,\gamma_0+s/n) \cr
&\qquad\qquad
   +n\{1-F_n(\gamma_0+s/n)\}t_2^2/n \cr
&\arr_p
   (b_0-a_0)^2N^*(\lambda,s)
   +F(\gamma_0)t_1^2+\{1-F(\gamma_0)\}t_2^2,
\end{align*}
writing $\lambda=f(\gamma_0)$. 
Similarly going through the case of $s<0$ gives the 
$C_n\arr_d C$ result, for all $(s,t_1,t_2)$. 

Next we work with $D_n$, which for the case of $s>0$ may
be expressed as 
$$D_n(s,t_1,t_2)=t_1n^{-1/2}\sum_{x_i\le\gamma_0}\eps_i
   +t_2n^{-1/2}\sum_{x_i>\gamma_0+s/n}\eps_i
   +(a_0-b_0+t_2/\rootn)\sum_{\gamma_0<x_i\le\gamma_0+s/n}\eps_i. $$
These three terms are independent, and are seen via
the definition and results noted above to tend to respectively 
$t_1V_1$, $t_2V_2$, $\sigma|b_0-a_0|\,W^*(\lambda,s)$, 
as announced. The case of $s<0$ is handled similarly.

\end{proof}

Because the criterion function is a step function in $s$, 
the standard argmax-argument used e.g.~in \citet{vdVaartWellner96} 
and \citet{Kosorok08} is not valid. This is because the argmax has 
a continuum of solutions.
A way to solve this problem is to we impose uniqueness 
by working with a solution having 
a certain given property, such that it is the smallest 
or largest solution in terms of the value of $s$.
These functionals are called sargmax and
largmax respectively. Other choices can be the middle point, etc.
Asymptotics for the sargmax and largmax is formalised in \citet{SeijoSen11b}. 
Their general theory provides conditions for extending the standard argmax
continuous mapping result to the case when one of the coordinates in
the criterion function is a step-function. This fits in perfectly with
the current setting, and indeed the present section's conclusions are also
reached in \citet{SeijoSen11b}. Their paper does not deal
with the case when several parameters in the criterion function are 
step functions which we will need later on. It seems reasonable that
their results for sargmax and largmax extends to this case as well,
but we do not provide such an extension here. However, we will
from now on understand argmax as sargmax for the coordinates with
non-unique solutions without further comment.

For the current one jump case, the theory in cf.~\citet{SeijoSen11b} 
implies both that $n (\hatt\gamma_n-\gamma_0)=O_\pr(1)$ and
$$\bigl(n(\hatt\gamma-\gamma_0),
\rootn(\hatt a-a_0),\rootn(\hatt b-b_0)\bigr) 
= \argmin(H_n)\arr_d\argmin(H)=(\hatt s,\hatt t_1,\hatt t_2), $$
say. We observe that both $C$ and $D$ above split nicely into
three different terms, involving $s,t_1,t_2$ separately. 
This makes the argmin principle work quite easily, as we may
work out $\hatt s,\hatt t_1,\hatt t_2$ separately,
along with studying their properties.
We find that $\hatt a,\hatt b,\hatt\gamma$ 
are asymptotically independent, with 
\beq
\label{eq:threelimits}
\begin{array}{rcl}
\rootn(\hatt a-a_0)
   &\arr_d& V_1/F(\gamma_0)\sim\N(0,\sigma_0^2F(\gamma_0)^{-1}), \cr
\rootn(\hatt b-b_0)
   &\arr_d& V_2/\{1-F(\gamma_0)\}\sim
   \N(0,\sigma_0^2\{1-F(\gamma_0)\}^{-1}), \cr
n(\hatt\gamma-\gamma_0)
   &\arr_d&\hatt s=\argmax(M). 
\end{array}
\eeq 
Here 
\beq
\label{eq:Msfirsttime}
\begin{array}{rcl}
M(s)&=&\sigma_0|b_0-a_0|W^*(\lambda,s)-\half(b_0-a_0)^2N^*(\lambda,s) \\
&=&\sigma|b_0-a_0|\,\{W^*(\lambda,s)-e_0N^*(\lambda,s)\},
\end{array}
\eeq
where $e_0=\half|b_0-a_0|/\sigma_0$. 
For given intensity $\lambda=f(\gamma_0)$,
the distribution of $\argmax(M)$ depends on the 
$a_0,b_0,\sigma_0$ parameters via $e_0$. 
Ten simulated $M(s)$ curves are displayed 
in Figure \ref{figure:figureC}. 
We have $W^*(\lambda,s)=O_\pr(s^{1/2})$ 
whereas $N^*(\lambda,s)/(\lambda|s|)\arr_\pr1$ 
for growing $|s|$, so $M(s)$ may for large $|s|$
be viewed as roughly equal to $-\half(b_0-a_0)^2\lambda|s|$ 
plus noise of order $\sqrt{|s|}$; in particular, there
is a unique interval $[s_1,s_2]$ at which $M$ 
attains its maximum value. As we let the argmax be the
sargmax, we have that the argmax is defined as $s_1$.

\begin{figure}
\centering
\includegraphics[width = \textwidth]{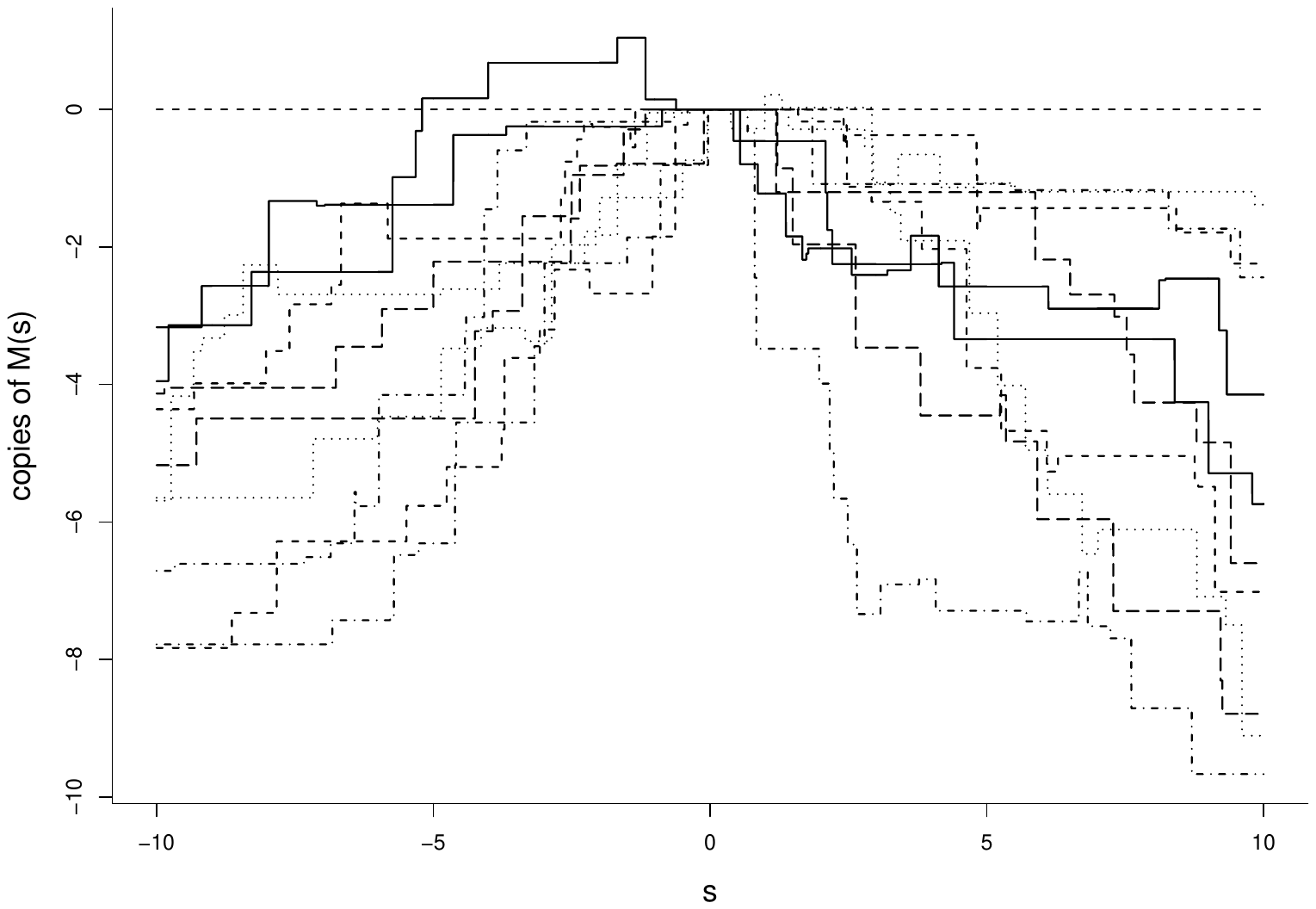}
\caption{
Ten simulated $M(s)$ curves of the type 
(\ref{eq:Msfirsttime}), 
with values $(0.5,2.0,3.0,1.0)$ for $(\sigma_0,a_0,b_0,\lambda)$. 
Here $\hatt s=\argmax(M)$ is the limit distribution of 
$n(\hatt\gamma-\gamma_0)$.} 
\label{figure:figureC}
\end{figure} 

\smallskip
\begin{remark}
\label{remark:remarkA}
{{\rm 
This result makes it possible to construct a confidence 
interval also for the break point parameter $\gamma_0$. 
For given $c$ and $\lambda$ positive, 
let $q(0.975,e_0,\lambda)$ denote the 0.975 quantile 
of the distribution of say 
$G(e_0,\lambda)=\argmax\{W^*(\lambda,s)-\half e_0N^*(\lambda,s)\}$, 
so that $[-q(0.975,e_0,\lambda),q(0.975,e_0,\lambda)]$ 
covers precisely 0.95 of this distribution (since it 
is symmetric). Since $q(0.975,e_0,\lambda)$ is seen 
to be continuous in $(e_0,\lambda)$, 
$$\Pr\{\gamma_0\in\hatt\gamma\pm 
   q(0.975,\hatt e_0,\hatt\lambda)/n\}\arr0.95 $$
as long as $\hatt e_0$ and $\hatt\lambda$ are consistent 
estimators of $e_0=\half|b_0-a_0|/\sigma_0$ 
and $\lambda=f(\gamma_0)$ respectively.  
Here we may simply insert sample-based versions of 
the unknown quantities, including say $\hatt f(\hatt\gamma)$
with a kernel estimator for $f$. 
In some cases the design density is even known 
from the context, e.g.~when it is uniform.}}
\end{remark}


\section{The two-window case, outside model conditions}
\label{section:twowindowsB} 

Extending Lemma \ref{lemma:HntoHundermodel}
and its corollaries to the case 
where the real regression mechanism lies outside 
the (\ref{eq:intro3}) model is of interest 
from several perspectives, and will in particular
be necessary for establishing our jump information
criterion AJIC in Section \ref{section:ajic}. 
We follow and modify the development for the 
previous subsection, and start from the assumption 
that $y_i=m(x_i)+\eps_i$ for i.i.d.~zero-mean errors $\eps_i$, 
for some underlying true regression function $m_\true=m$
that is not necessarily flat on $[0,\gamma_0]$ 
and $(\gamma_0,1]$ for some $\gamma_0$. 
We need to define the least false parameters
$(\gamma_{0,n},a_{0,n},b_{0,n})$ that minimise 
$$\sumin\{m(x_i)-m(x,\gamma_0,a,b)\}^2
   =\sum_{x_i\le\gamma_0}\{m(x_i)-a\}^2
   +\sum_{x_i>\gamma_0}\{m(x_i)-b\}^2. $$
We find that 
$$a_{0,n}=\bar m_L={\sum_{x_i\le\gamma_{0,n}} m(x_i)\over \sum_{x_i\le\gamma_{0,n}} 1}
   \quadandquad 
  b_{0,n}=\bar m_R={\sum_{x_i>\gamma_{0,n}} m(x_i)\over \sum_{x_i>\gamma_{0,n}} 1}, 
   \eqno(2.3)$$
and there will be a certain interval $(x_i,x_{i+1})$ 
of values of $\gamma$ for which the profile function 
$$R_n(\gamma)=\sum_{x_i\le\gamma}\{m(x_i)-\bar m_L\}^2
   +\sum_{x_i>\gamma} \{m(x_i)-\bar m_R\}^2 $$
is smallest. We define $\gamma_{0,n}$ to be the mid-point
of that interval. These least false parameters,
for given $x_1,\ldots,x_n$, have limits  
$$a_0={\int_0^{\gamma_0}m(x)f(x)\,\dd x\over \int_0^{\gamma_0}f(x)\,\dd x}
   \quadandquad
  b_0={\int_{\gamma_0}^1m(x)f(x)\,\dd x\over \int_{\gamma_0}^1f(x)\,\dd x} $$
as $n$ grows to infinity, where we assume there 
is a unique $\gamma_0$ minimising 
$$R(\gamma)=\int_0^\gamma \{m(x)-a_0(\gamma)\}^2f(x)\,\dd x
   +\int_\gamma^1 \{m(x)-b_0(\gamma)\}^2f(x)\,\dd x. $$

Parallelling the development of the previous section, 
we may write
\begin{align*}
H_n(s,t_1,t_2)
   &=Q_n(\gamma_0+s/n,a_{0,n}+t_1/\rootn,b_{0,n}+t_2/\rootn)
   -Q_n(\gamma_0,a_{0,n},b_{0,n}) \cr
   &=C_n(s,t_1,t_2)-2D_n(s,t_1,t_2),
\end{align*}
with 
\begin{align*}
C_n(s,t_1,t_2)
   &=\sumin\bigl[\{m(x_i,\gamma_{0,n}+s/n,a_{0,n}+t_1/\rootn,
   b_{0,n}+t_2/\rootn)-m(x_i)\}^2 \cr
   &\qquad-\{m(x_i,\gamma_{0,n},a_{0,n},b_{0,n})-m(x_i)\}^2\bigr], \cr
D_n(s,t_1,t_2)
   &=\sumin\{m(x_i,\gamma_{0,n}+s/n,a_{0,n}
   +t_1/\rootn,b_{0,n}+t_2/\rootn) \cr
   &\qquad -m(x_i,\gamma_{0,n},a_{0,n},b_{0,n})\}\eps_i.
\end{align*}

For the following result we require that the true
regression function $m$ has left- and right-hand limits
$m(\gamma_0-)$ and $m(\gamma_0+)$ at a genuine 
jump point $\gamma_0$, and furthermore that 
$m(\gamma_0+)$ is closer to $b_0$ than $a_0$
and that $m(\gamma_0-)$ is closer to $a_0$ than $b_0$. 
In particular, the quantities
\beq
\label{eq:c1c2}
\begin{array}{rcl}
c_1&=\{m(\gamma_0-)-b_0\}^2-\{m(\gamma_0-)-a_0\}^2, \\
c_2&=\{m(\gamma_0+)-a_0\}^2-\{m(\gamma_0+)-b_0\}^2 
\end{array}
\eeq
are then both positive. 
Under model conditions, $m(\gamma_0+)=b_0$ and $m(\gamma_0-)=a_0$, 
making both $c_1$ and $c_2$ equal to $|b_0-a_0|^2$,
which is the situation of Lemma \ref{lemma:HntoHundermodel}.
The following generalises that result. 

\begin{lemma}
\label{lemma:HntoHoutside}
Under the above conditions, there is process convergence
$$H_n(s,t_1,t_2)\arr_d H(s,t_1,t_2)=C(s,t_1,t_2)-2D(s,t_1,t_2), $$
where 
\begin{align*}
C(s,t_1,t_2)&=F(\gamma_0)t_1^2+\{1-F(\gamma_0)\}t_2^2
   +\left\{ 
\begin{matrix} 
c_1 N^*(f(\gamma_0),s) & \text{if }s<0, \cr
c_2 N^*(f(\gamma_0),s) & \text{if }s\ge 0, 
\end{matrix} \right. \cr
D(s,t_1,t_2)&=V_1t_1+V_2t_2+\sigma_\true|b_0-a_0|\,W^*(f(\gamma_0),s), 
\end{align*}
where $V_1$, $V_2$, $(N^*,W^*)$ are independent,
with $V_1$ and $V_2$ zero-mean normals with variances
$\sigma_\true^2F(\gamma_0)$ and $\sigma_\true^2\{1-F(\gamma_0)\}$. 
The convergence takes place in the Skorokhod function space
$D([-r,r]^3)$, for each positive $r$.
\end{lemma}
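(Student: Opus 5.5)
The plan is to follow the proof of Lemma \ref{lemma:HntoHundermodel}, now centring at the least false parameters $(\gamma_{0,n},a_{0,n},b_{0,n})$, which converge to $(\gamma_0,a_0,b_0)$. We split $C_n$ and $D_n$ into contributions from the three index sets: $x_i$ to the left of both $\gamma_{0,n}$ and $\gamma_{0,n}+s/n$, $x_i$ between them, and $x_i$ to the right of both. The cases $s\ge0$ and $s<0$ are handled separately. We first establish pointwise (finite-dimensional) convergence and then indicate tightness, as was done for the earlier lemma.

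\emph{Step 1 (the deterministic part $C_n$), case $s>0$.} For $x_i\le\gamma_{0,n}$ the summand is $\{a_{0,n}+t_1/\rootn-m(x_i)\}^2-\{a_{0,n}-m(x_i)\}^2$. Summed, this gives
\[
t_1^2F_n(\gamma_{0,n})+2t_1n^{-1/2}\sum_{x_i\le\gamma_{0,n}}\{a_{0,n}-m(x_i)\}.
\]
The cross term vanishes exactly by the defining property $a_{0,n}=\bar m_L$, which is the analogue of (\ref{eq:productiszero}). The right block is treated the same way using $b_{0,n}=\bar m_R$. The only difference is that this block excludes the $N_n(s)$ points in $(\gamma_{0,n},\gamma_{0,n}+s/n]$, and removing them costs $O_\pr(N_n(s)/\rootn)=o_\pr(1)$. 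This yields $F(\gamma_0)t_1^2+\{1-F(\gamma_0)\}t_2^2$ in the limit.

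For the middle block the fit switches from $b$-level to $a$-level, so each summand equals
\[
\{a_{0,n}+t_1/\rootn-m(x_i)\}^2-\{b_{0,n}-m(x_i)\}^2 .
\]
Since these $x_i$ lie within $O(1/n)$ to the right of $\gamma_{0,n}\to\gamma_0$, we may replace $m(x_i)$ by $m(\gamma_0+)$ using right-continuity, and $a_{0,n},b_{0,n}$ by $a_0,b_0$. Each summand then tends to $c_2$, and the block converges to $c_2N^*(f(\gamma_0),s)$. The case $s<0$ gives $c_1$ by symmetry, with $m(\gamma_0-)$ in place of $m(\gamma_0+)$.

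\emph{Step 2 (the stochastic part $D_n$).} The outer blocks give $t_1n^{-1/2}\sum_{x_i\le\gamma_{0,n}}\eps_i\to_d t_1V_1$ and similarly $t_2V_2$, by the central limit theorem with variance $\sigma_\true^2F(\gamma_0)$. The middle block is $(a_{0,n}-b_{0,n}+O(n^{-1/2}))\sum\eps_i$ over the $N_n(s)$ points. By the compound Poisson convergence of $W_n(s)$ noted before Lemma \ref{lemma:HntoHundermodel}, now with $\sigma_\true$ and centred at $\gamma_{0,n}$, this converges to $(a_0-b_0)\sigma_\true W^*$. The sign is absorbed by the symmetry of $W^*$, giving $\sigma_\true|b_0-a_0|W^*$. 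Independence of $V_1$, $V_2$ and $(N^*,W^*)$ follows because the three blocks use disjoint sets of the $\eps_i$, and the middle block involves asymptotically negligibly many points. Joint convergence over finitely many $(s,t_1,t_2)$ is obtained through joint moment generating functions.

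\emph{Main obstacle.} The delicate point is that the centring $\gamma_{0,n}$ is itself random, since it is a function of $x_1,\ldots,x_n$. The Poisson limit of the counts in $(\gamma_{0,n},\gamma_{0,n}+s/n]$ therefore needs $n(\gamma_{0,n}-\gamma_0)$ to behave well, or an argument conditional on the design. I would first show $\gamma_{0,n}-\gamma_0=O_\pr(1/n)$ by applying the argmin argument to the deterministic profile $R_n$. The Poisson convergence then holds conditionally on the location, because the $x_i$ near $\gamma_{0,n}$ are, apart from the boundary point, approximately a Poisson process of intensity $f(\gamma_0)$. Tightness in $D([-r,r]^3)$ would be handled only in sketch form, as for the previous lemma. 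The $t$-directions are quadratic with $o_\pr(1)$ remainders uniformly on compacts, and the $s$-direction is a pure-jump process with finitely many jumps on $[-r,r]$.
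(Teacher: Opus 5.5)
\paragraph{Verdict.} There is a genuine gap: centring the break-point coordinate at $\gamma_{0,n}$ does not give the two-sided Poisson limit, and your resolution of the ``main obstacle'' is false.

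\paragraph{What matches the paper.} Your three-block decomposition agrees with the paper's sketch. So do the exact vanishing of the outer cross terms via $a_{0,n}=\bar m_L$ and $b_{0,n}=\bar m_R$, the middle-block constants $c_2$ ($s>0$) and $c_1$ ($s<0$), and your treatment of $D_n$.

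\paragraph{Why centring at $\gamma_{0,n}$ fails.} Push your own rate argument one step further. Moving one design point across the split changes $R_n$ by an amount tending to $c_2>0$ (on the right) or $c_1>0$ (on the left). Together with consistency, this means that with probability tending to one $R_n$ is minimised exactly on the design spacing $(x_{(k)},x_{(k+1)})$ containing $\gamma_0$, so $\gamma_{0,n}$ is its midpoint. The nearest design points on the two sides of $\gamma_{0,n}$ are therefore equidistant from it.

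Let $\tau_R=n(x_{(k+1)}-\gamma_0)$ and $\tau_L=n(\gamma_0-x_{(k)})$. These are asymptotically i.i.d.\ exponential with rate $\lambda=f(\gamma_0)$; they are the first jump times of $N^*$. Your count of points in $(\gamma_{0,n},\gamma_{0,n}+s/n]$ then converges to the shifted process $N^*(s+\Delta)$, with the random, $N^*$-dependent shift $\Delta=(\tau_R-\tau_L)/2$. This process has two defects:
\begin{itemize}
\item its first jumps on both sides occur simultaneously, at $|s|=(\tau_L+\tau_R)/2$;
\item already on one side, $\Pr\{\text{no jump in }(0,s]\}\arr e^{-2\lambda s}(1+2\lambda s)\neq e^{-\lambda s}$.
\end{itemize}
So this is not the two-sided Poisson process $N^*(f(\gamma_0),s)$ of the lemma, and the same defect passes to $W^*$. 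The boundary points are not negligible. Conditioning on the location of $\gamma_{0,n}$ cannot restore the Poisson structure, because that location is determined by those very points.

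A related point: the rate $\gamma_{0,n}-\gamma_0=O_P(1/n)$ alone would not justify replacing $m(x_i)$ by $m(\gamma_0+)$ in the middle block. A point of $(\gamma_{0,n},\gamma_{0,n}+s/n]$ lying left of $\gamma_0$ contributes a summand tending to $-c_1$, not $c_2$.

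\paragraph{The fix.} Do what the paper's $H_n$ and its sketch do. (Its displayed $C_n$ writes $\gamma_{0,n}$, but $H_n$ and the sketch both split at $\gamma_0$.) Centre only the levels at $a_{0,n},b_{0,n}$ and keep the break point at the fixed $\gamma_0$:
\[
H_n(s,t_1,t_2)=Q_n(\gamma_0+s/n,\,a_{0,n}+t_1/\sqrt{n},\,b_{0,n}+t_2/\sqrt{n})-Q_n(\gamma_0,a_{0,n},b_{0,n}),
\]
with the blocks split at $\gamma_0$ and $\gamma_0+s/n$.

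Then $N_n(s)$ counts design points in a deterministic interval. It is exactly binomial $(n,F(\gamma_0,\gamma_0+s/n))$ and has the two-sided Poisson limit with independent sides.

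Your analysis of $\gamma_{0,n}$ is still needed, but only to show that with probability tending to one no $x_i$ lies between $\gamma_0$ and $\gamma_{0,n}$. Then $\{x_i\le\gamma_0\}=\{x_i\le\gamma_{0,n}\}$, and your exact cancellation of the outer cross terms carries over unchanged.
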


Again, we only provide a sketch of the proof, as we do not provide a formal tightness proof required for full process convergence.
\begin{proof}[Sketch of proof]
That $D_n$ tends to the same type of limit process $D$ as 
for Lemma \ref{lemma:HntoHundermodel}, 
with $\sigma_\true$ now replacing $\sigma_0$ there, 
follows essentially as in the proof of that lemma, 
with some more attention to details. 
Now $C_n$ has a more complicated limit function $C$, however. 

To study this limit, let first $s>0$. The `left' part 
of $C_n$ takes the form  
$$\sum_{x_i\le \gamma_0}[\{a_{0,n}+t_1/\rootn-m(x_i)\}^2
   -\{a_{0,n}-m(x_i)\}^2], $$
 is equal to $F_n(\gamma_0)t_1^2$,
converging again to $F(\gamma_0)t_1^2$. 
The `right' part, with $x_i>\gamma_0+s/n$ terms, 
is a bit more complicated, but ends up converging to 
$\{1-F(\gamma_0)\}t_2^2$. The middle part of $C_n$, 
summing over $x_i$ in $(\gamma_0,\gamma_0+s/n]$, of which there
are $N_n(s)$ terms converging to $N^*(f(\gamma_0),s)$, 
consists of terms essentially equal to 
$$\{m(\gamma_{0,n}+)-a_{0,n}\}^2-\{m(\gamma_{0,n}+)-b_{0,n}\}^2. $$
By assumptions, therefore, this middle part of $C_n(s,t_1,t_2)$
converges to the announced $c_2N^*(f(\gamma_0),s)$ 
when $s>0$. The case of $s<0$ is handled similarly, 
and one finds that the middle part of $C_n$ tends
to $c_1N^*(f(\gamma_0),s)$. 

\end{proof}

A fuller analysis taking also $\sigma$ estimation
into account starts with the log-likelihood function
$\ell_n(\theta,\sigma)
   =-n\log\sigma-\half(1/\sigma^2)Q_n(\theta)$,
as in Section \ref{section:smooth}, but now with 
the more complicated and non-smooth $Q_n(\theta)$
of (\ref{eq:Qnfortwo}). 
The parallel to the crucial $A_n$ function (\ref{eq:An}) 
studied earlier is now 
\beqn
A_n(s,t_1,t_2,u)
=\ell_n(\gamma_0+s/n,a_0+t_1/\rootn,b_0+t_2/\rootn,\sigma_0+u/\rootn)
   -\ell_n(\gamma_0,a_0,b_0,\sigma_0).
\eeqn
Via efforts above this random function 
is seen to converge in distribution to 
\beq
\label{eq:Aforfour}
\begin{array}{rcl}
A(s,t_1,t_2,u)
&=&{u\over \sigma_0}W-{u^2\over \sigma_0^2}
   -\half{1\over \sigma_0^2}H(s,t_1,t_2) \\
&=&{u\over \sigma_0}W-{u^2\over \sigma_0^2}
   +{1\over \sigma_0²} \{D(s,t_1,t_2)-\half C(s,t_1,t_2)\} \\
&=&{u\over \sigma_0}W-{u^2\over \sigma_0^2}
   +{1\over \sigma_0^2} 
   \{V_1t_1+V_2t_2-\half F_1t_1^2-\half F_2t_2^2+M(s)\},
\end{array}
\eeq
say. Here we write for simplicity 
$F_1=F(0,\gamma_0)$ and $F_2=F(\gamma_0,1)$, and
\beq
\label{eq:Mstwosided}
M(s)=\sigma_\true|b_0-a_0|W^*(f(\gamma_0),s)
   -\half c(s)N^*(f(\gamma_0),s), 
\eeq 
with $c(s)$ being $c_1$ for $s$ positive 
and $c_2$ for $s$ negative. This properly generalises
the case of (\ref{eq:Msfirsttime}), where
$c_1$ and $c_2$ of (\ref{eq:c1c2}) both are equal 
to $(b_0-a_0)^2$.

We learn from this that $\hatt a,\hatt b,\hatt\gamma,\hatt\sigma$
are again asymptotically independent, 
with $\hatt a$ and $\hatt b$ having 
the same limit distributions as in (\ref{eq:threelimits}), 
but that the limit distribution for $\hatt\gamma$
is now more complicated; 
$n(\hatt\gamma-\gamma_0)\arr_d\argmax(M)$.  We note that a fully rigorous proof of this result requires the simultaneous process convergence of $A$ and $A$'s pure jump process, as explained in \citet{SeijoSen11b}.

\section{The multi-windows model} 
\label{section:multiwindows}

We now proceed to the more general case of several break points
and windows. The theory and results of the previous section
generalise rather nicely, without severe complications.

Consider therefore model (\ref{eq:intro2}) with 
the regression function is flat inside each of $d$ windows. 
The estimators $\hatt a_j$ and $\hatt\gamma_j$ 
are again those that minimise
$Q_n(\theta)=\sumin\{y_i-m(x_i,\theta)\}^2$. 
This is minimised for given $\gamma$ by setting
$a_j$ equal to $\hatt a_j(\gamma)$, the average of $y_i$ 
in window $G_j(\gamma)=(\gamma_{j-1},\gamma_j]$. Hence 
windows are chosen such that the combined variation 
$$Q_{n,\prof}(\gamma)
   =Q_n(\hatt a(\gamma),\gamma)
   =\sum_{j=1}^d\sum_{i\in G_j(\gamma)}\{y_i-\hatt a_j(\gamma)\}^2 $$
is minimal. This is equivalent to maximising 
$\sum_{j=1}^d n_j\hatt a_j(\gamma)^2$.

In this section we do assume that the (\ref{eq:intro2}) 
model is fully correct, with true parameter values 
$a_{j,0}$ (there are $d$ such) 
and $\gamma_{j,0}$ (there are $d-1$ such). Define
$$H_n(s,t)=Q_n(\gamma_0+s/n,a_0+t/\rootn)-Q_n(\gamma_0,a_0)
   =C_n(s,t)-2D_n(s,t), $$
where $t=(t_1,\ldots,t_d)^\tr$ and $s=(s_1,\ldots,s_{d-1})^\tr$. 
The methods and results of 
Sections \ref{section:twowindowsA}--\ref{section:twowindowsB} 
are not very hard to generalise, and lead to 
$$H_n(s,t)\arr_d H(s,t)=C(s,t)-2D(s,t), $$
with 
\begin{align*}
C(s,t)&=\sum_{j=1}^{d-1} c_j(s) N_j^*(\lambda_j,s_j)
   +\sum_{j=1}^d F(\gamma_{j-1,0},\gamma_{j,0})t_j^2, \cr
D(s,t)&=\sum_{j=1}^{d-1}\sigma_\true |a_{j+1,0}-a_{j,0}|
   W^*_j(\lambda_j,s) + \sum_{j=1}^d V_jt_j, 
\end{align*}
where $W_1^*(\lambda_1,s),\ldots,W_{d-1}^*(\lambda_{d-1},s)$ 
are independent two-sided compound Poisson processes, 
as described in association with Lemma \ref{lemma:HntoHundermodel},
with $\lambda_j=f(\gamma_{0,j})$, 
and independent of the $V_j$,
which are independent zero-mean normals with variances
$\sigma^2F(\gamma_{0,j-1},\gamma_{0,j})$. 
Here we write $F(c,d)$ for $F(d)-F(c)$. 
Finally, $c_j(s)$ above is $c_{j,1}$ for $s>0$ and $c_{j,2}$ 
for $s<0$, where 
\beq
\label{eq:cj1cj2}
\begin{array}{rcl}
c_{j,1}&=&\{m(\gamma_{0,j}-)-a_{0,j+1}\}^2
   -\{m(\gamma_{0,j}-)-a_{0,j}\}^2, \\ 
c_{j,2}&=&\{m(\gamma_{0,j}+)-a_{0,j}\}^2
   -\{m(\gamma_{0,j}+)-a_{0,j+1}\}^2. 
\end{array}
\eeq 
When the model used is correct, then $m(\gamma_{0,j}-)=a_{0,j}$
and $m(\gamma_{0,j}+)=a_{0,j+1}$, so that both $c_{j,1}$ 
and $c_{j,2}$ are equal to $(a_{0,j+1}-a_{0,j})^2$,
cf.~Lemma \ref{lemma:HntoHundermodel}.

The crucial corollary is that all $2d$ 
parameter estimators are asymptotically independent, with 
\begin{align*}
\rootn(\hatt a_j-a_{j,0})
   &\arr_dV_j/F(\gamma_{j-1,0},\gamma_{j,0})
   \sim\N(0,\sigma^2/F(\gamma_{j-1,0},\gamma_{j,0})), \cr
n(\hatt\gamma_j-\gamma_{j,0})
   &\arr_d \argmax(M_j), 
\end{align*}
where $c_j=\half|a_{j+1,0}-a_{j,0}|/\sigma$
and $\lambda_j=f(\gamma_{0,j})$, featuring also 
\beq
\label{eq:Mjs}
M_j(s)=\sigma_\true |a_{0,j+1}-a_{0,j}| W_j^*(\lambda_j,s)
   -\half c_j(s) N_n^*(\lambda_j,s). 
\eeq 
The $c_j,\lambda_j$ quantities may be estimated 
consistently, after which confidence intervals 
may be computed for each break point $\gamma_j$, 
as per Remark \ref{remark:remarkA}.

Note that to fully formalise this argument, an extension of 
the theory in \citet{SeijoSen11b} to the case with several
piecewise constant coordinates are needed. We do not
provide this extension here.

\section{The AJIC} 
\label{section:ajic}

To develop a proper analogue of the AIC and its 
model robust version $\AIC^*$ we need to follow 
the chain of arguments given in Section \ref{section:smooth},
but to modify results and formulae under way
to take the nonstandard asymptotics into account. 
This is what we focus on in this section, aiming
for a formula of the type $\AIC^*=2(\ell_{n,\max}-\hatt b)$,
with $\hatt b$ the estimate of the positive bias $b$
involved when assessing $\ell_{n,\max}/n$ as an estimator
of the model specific part of the Kullback--Leibler
divergence from truth to model. 

We begin with 
\beqn
A_n(s,t,u)&=&
   \ell_n(\gamma_{0,n}+s/n,a_{0,n}+t/\rootn,\sigma_{0,n}+u/\rootn)
   -\ell_n(\gamma_{0,n},a_{0,n},\sigma_{0,n}) \\
   & &\arr_d A(s,t,u)
   ={u\over \sigma_0}W-{u^2\over \sigma_0^2}
   -\half{1\over \sigma_0^2}\{C(s,t)-2D(s,t)\}.
\eeqn 
The limit process may be written in the form 
\beqn
A(s,t,u)={u\over \sigma_0}W-{u^2\over \sigma_0^2}
   +{1\over \sigma_0^2}\sum_{j=1}^d 
   \{V_jt_j-\half F(\gamma_{0,j-1},\gamma_{0, j})t_j^2\}
   +{1\over \sigma_0^2}\sum_{j=1}^{d-1} M_j(s_j),
\eeqn 
with $M_j(s)$ as in (\ref{eq:Mjs}). 
Here we need 
$$A_n(\hatt s_n,\hatt t_n,\hatt u_n)
   =\ell_n(\hatt\theta,\hatt\sigma)-\ell_n(\theta_{0,n},\sigma_{0,n})
   \arr_d A(\hatt s,\hatt t,\hatt u), $$
where $\hatt t_j=V_j/F(\gamma_{0,j-1},\gamma_{0,j})$ 
and $\hatt u=\half\sigma_0W$, along with 
the more involved $\hatt s_j=\argmin(H_j)=\argmax(M_j)$. 
So $\ell_{n,\max}=\ell_{n,0}+E_{n,1}$
with 
$$E_{n,1}\arr_d A(\hatt s,\hatt t,\hatt u)
   =\quart W^2+\half{1\over \sigma_0^2}
   \sum_{j=1}^d {V_j^2\over F(\gamma_{0,j-1},\gamma_{0,j})}
   +{1\over \sigma_0^2}\sum_{j=1}^{d-1} M_j(\hatt s_j).$$

Next up is the parallel to the $B_n$ function 
of Section \ref{section:smooth}, which takes the form
$$B_n(s,t,u)=k_n(\gamma_{0,n}+s/n,a_{0,n}+t/\rootn,\sigma_{0,n}+u/\rootn)
  -k_n(\gamma_{0,n},a_{0,n},\sigma_{0,n}). $$
We find that it tends in distribution to 
$$B(s,t,u)=-{u^2\over \sigma_0^2}-\half{1\over \sigma_0^2}C(s,t) $$
with 
$$C(s,t)=\sum_{j=1}^{d-1} e_j(s)N_j^*(\lambda_j,s)
    +\sum_{j=1}^d F(\gamma_{j-1,0},\gamma_{j,0})t_j^2, $$
where $e_j(s)$ is $d_{1,j}$ for $s$ positive and $d_{2,j}$
for $s$ negative. With 
$$k_n(\hatt\theta,\hatt\sigma)
   =k_n(\theta_{0,n},\sigma_{0,n})+E_{n,2}, $$
therefore, we have
$$E_{n,2}=B_n(\hatt s_n,\hatt t_n,\hatt u_n)
   \arr_d E_2=B(\hatt s,\hatt t,\hatt u), $$
yielding
$$E_2=-\quart W^2-\half{1\over \sigma_0^2}
   \sum_{j=1}^d {V_j^2\over F(\gamma_{0,j-1},\gamma_{0,j})}
   -\half{1\over \sigma_0^2}
   \sum_{j=1}^{d-1} c_j(\hatt s_j)N_j^*(\lambda_j,s). $$
This means that the extra part, the positive 
random component of $\ell_{n,\max}$ as an estimator of 
$k_n(\hatt\theta,\hatt\sigma)$, tends in distribution to 
$$E_1-E_2=\half W^2+{1\over \sigma_0^2}
   \sum_{j=1}^d{V_j^2\over F(\gamma_{0,j-1},\gamma_{0,j})}
   +{1\over \sigma_0^2}\sum_{j=1}^{d-1}
   \{M_j(\hatt s_j)+\half c_j(\hatt s_j)N_j^*(\lambda_j,\hatt s_j)\}. $$
The expected bias, for large $n$, is hence 
$$b=1+d{\sigma_\true^2\over \sigma_0^2}
   +{1\over \sigma_0^2}\sum_{j=1}^{d-1}\kappa_j, $$
say, with 
$$\kappa_j
=\E\,\{M_j(\hatt s_j)+
   \half c_j(\hatt s_j)N_j^*(\lambda_j,\hatt s_j)\}
=\sigma_\true|a_{0,j+1}-a_{0,j}|\,\E\,W_j^*(\lambda_j,\hatt s_j). $$
We conclude that the model robust AIC style JIC for
comparing and assessing jump regression models with 
different numbers of windows is 
\beq
\label{eq:ajicstar}
\AJIC^*=2\ell_{n,\max}-2\hatt b
   =2(-n\log\hatt\sigma_0-\half n)
   -2\Bigl(1+d{\hatt\sigma^2\over \hatt\sigma_0^2}
   +{1\over \hatt\sigma_0^2}\sum_{j=1}^{d-1}\hatt\kappa_j\Bigr).
\eeq
This score is computed for each candidate number 
$d$ of windows, say up to a pre-selected $d_{\max}$,
and in the end the model is selected which has
the largest such score. 

As indicated $\ell_{n,\max}$ here is still the simple formula
$-n\log\hatt\sigma_0-\half n$, but with 
$\hatt\sigma_0^2=Q_n(\hatt\gamma,\hatt a)/n$ 
involving the estimation of break points and levels
for the number $d$ of windows under consideration. 
Also, $\hatt\sigma$ is an estimate of $\sigma_\true$,
e.g.~taken as the $\hatt\sigma_0$ for the biggest
$d_{\max}$ under consideration. The most tricky part
of $\AJIC^*$ is the $\hatt\kappa_j$, an estimate
of $\kappa_j$. Since no closed-form expression 
for this expected value is available 
we resort to simulation, taking $\hatt\kappa_j$ 
to be the average of say 1000 simulated copies of 
$\hatt\sigma|\hatt a_{j+1}-\hatt a_j|W^*(\hatt\lambda_j,\hatt s_j)$.
Here $\hatt\lambda_j$ is a consistent estimate of 
$f(\gamma_{0,j})$, e.g.~using a kernel estimate for
the $x_i$; in cases where the design density is known,
however, as when the $x_i$ are seen as uniformly
distributed on the interval in question, we
simply use that known value of $f(\gamma_{0,j})$. 
See Section \ref{section:illustrations} for an illustration.

\section{The BJIC} 
\label{section:bjic}

Here we aim for a selection criterion of the BIC 
type, but with the necessary modifications of
the chain of arguments used to reach 
(\ref{eq:bicforsmooth}) for the smooth regression 
case. As we saw in the BIC subsection of 
Section \ref{section:smooth}, this needs approximations
to marginal likelihoods, and this pans out 
different in jump models as we shall see now. 

In Section \ref{section:ajic} we worked with 
the random process $A_n(s,t,u)$ and its limit $A(s,t,u)$,
along with their maximisers, respectively 
$\hatt s_n,\hatt t_n,\hatt u_n$ and 
$\hatt s,\hatt t,\hatt u$. We infer from these
results that 
\beqn
A_n^*(s,t,u)
&=&A_n(\hatt s_n+s,\hatt t_n+t,\hatt u_n+u)
   -A_n(\hatt s_n,\hatt t_n,\hatt u_n) \\
&=&\ell_n(\hatt\gamma+s/n,\hatt a+t/\rootn,\hatt\sigma+u/\rootn)
   -\ell_n(\hatt\gamma,\hatt a,\hatt\sigma) \\
&\arr_d& A^*(s,t,u)
   =A(\hatt s+s,\hatt t+t,\hatt u+u)-A(\hatt s,\hatt t,\hatt u),
\eeqn
describing the log-likelihood function behaviour
around the ML values. We find 
$$A^*(s,t,u)
   =-{u^2\over \sigma_0^2}-\half{1\over \sigma_0^2}
   \sum_{j=1}^d F_jt_j^2
   +{1\over \sigma_0^2}\sum_{j=1}^{d-1}M_j(s_j), $$
again with $M_j(s_j)$ as in (\ref{eq:Mjs}), and
with $F_j$ short-hand notation for $F(\gamma_{0,j})-F(\gamma_{0,j-1})$. 

We are now in a position to approximate marginal 
likelihoods, which will lead to our BJIC. 
For a given number of windows parameter $d$, 
suppose a prior $\pi(\gamma,a,\sigma)$ is placed 
on the model parameters. 
The marginal likelihood is then 
\beqn
\lambda_n
&=&\int\pi(\gamma,a,\sigma)\exp\{\ell_n(\gamma,a,\sigma)\}
   \,\dd\gamma\,\dd a\,\dd\sigma \\
&=&\exp(\ell_{n,\max})
   \int\pi(\hatt\gamma+s/n,\hatt a+t/\rootn,\hatt\sigma+u/\rootn) \\
& &\qquad\qquad\times
   \exp\{A_n^*(s,t,u)\}\,\dd s\,\dd t\,\dd u\,
   \Big({1\over n}\Bigr)^{d-1}
   \Big({1\over \rootn}\Bigr)^{d}
   \Big({1\over \rootn}\Bigr).
\eeqn 
This leads to the approximation
\beqn
\log\lambda_n
   &=&\ell_{n,\max}-(d-1)\log n-\half(d+1)\log n \\
& &\qquad   
   +\log\int\exp\{A^*(s,t,u)\}\,\dd s\,\dd t\,\dd u
   +\log\pi(\hatt\gamma,\hatt a,\hatt\sigma),
\eeqn
assuming that the prior used is continuous on 
a region containing the parameter estimates. 
We may evaluate the integral part here, in part
using simulations, but it is at any rate clear that 
\beq
\label{eq:loglambdabjic}
\log\lambda_n=\ell_{n,\max}-\half(3d-1)\log n+\delta_n, 
\eeq
say, with $\delta_n$ of lower order than the two leading 
terms. This leads to the modified BIC for jump regression 
models, with 
\beq
\label{eq:bjic}
\BJIC_d=2\ell_{n,d,\max}-(3d-1)\log n, 
\eeq
for the case of model $M_d$, with $d$ windows. Here
$\ell_{n,d,\max}=-n\log\hatt\sigma_d-\half n$,
the log-likelihood maximum for model $M_d$
with $\hatt\sigma_d^2=n^{-1}Q_n(\hatt a)$ the ML
estimate under that model. Increasing $d$
leads decreasing $\hatt\sigma_d$ and hence 
increasing the log-likelihood maximum. 

The traditional BIC suggests penalty $2d\log n$, 
but our version (\ref{eq:bjic}) is the more correct 
version, reflecting the different type of asymptotics. 
As with (\ref{eq:bicforsmooth}), the interpretation
is that when candidate models $M_1,\ldots,M_{d_{\max}}$
are being considered, associated with number of 
windows $d$ being equal to $1,\ldots,d_{\max}$, 
and with prior probabilities
$p(M_1),\ldots,p(M_{d_{\max}})$, then 
$p(M_d\midd\data)\propto p(M_d)\exp(\half\BJIC_d)$. 
With equal prior probabilities for the $d_{\max}$
models, the model is selected with the highest
$\BJIC$ score. 

It is sometimes worth the extra trouble to construct
a more complete approximation to the marginal 
likelihoods, using (\ref{eq:loglambdabjic}) but including
the remainder term $\delta_n$, for each candidate model.
First, 
$$\int\exp\{A^*(s,t,u)\}\,\dd s\,\dd t\,\dd u
   =(2\pi)^{1/2}{\sigma_0\over \sqrt{2}}
   \prod_{j=1}^d(2\pi)^{1/2}{\sigma_0\over F_j^{1/2}}
   \prod_{j=1}^{d-1}\int\exp\Bigl\{{M_j(s_j)\over \sigma_0^2}\Bigr\}\,
   \dd s_j, $$
with logarithm say 
$$r_1=\half(d+1)\log(2\pi)+(d+1)\log\sigma_0-\half\log 2
   -\half\sum_{j=1}^d\log F_j+\sum_{j=1}^{d-1}\log\mu_j, $$
where $\mu_j=\int\exp(M_j/\sigma_0^2)\,\dd s$. 
Secondly, suppose we use independent priors for 
$\gamma$, the levels $a_1,\ldots,a_d$, and $\sigma$,
which appears to be a natural type of construction. 
If we use the natural flat Dirichlet prior
for $\gamma$ on the simplex of $\gamma_j-\gamma_{j-1}$ 
nonnegative and adding to one, and priors for the $a_j$ 
corresponding to uniform distributions
on intervals of length say $\tau$, then
$$r_2=\log\pi(\hatt\gamma,\hatt a,\hatt\sigma)
   =\log\Gamma(d)-d\log\tau+\log\pi_0(\hatt\sigma), $$
say, with $\pi_0(\sigma)$ denoting the prior used
for that parameter. The suggestion is then to 
use the fine-tuned 
$$\BJIC=2\ell_{n,d,\max}-(3d-1)\log n+2(\hatt r_1+r_2), $$
where $\hatt r_1$ inserts estimates for the $F_j$
and uses simulation to evaluate the $\mu_j$, 
and where $r_2$ uses some $\tau$ natural for
the application at hand. Finally we advocate omitting the
$\log\pi_0(\hatt\sigma)$ term in $r_2$ since that term
ought to be approximately the same across models. 

\section{Bayes estimators} 
\label{section:bayes}

Consider the Bayesian framework, which starts out with 
a joint prior density for the level and break point parameters
along with the spread parameter, 
to be combined in the usual fashion with the data likelihood,
which we take to be based on the normal assumption. 
For simplicity of presentation we focus here on the case
of two windows, as in Section \ref{section:twowindowsA}; 
the generalisation to the multi-window case will be fairly 
straightforward, using observations and techniques already 
worked with in Section \ref{section:multiwindows}.

The starting point is a prior density $\pi(\gamma,a,b,\sigma)$ 
for the four unknown parameters;
we would typically take these four to be independent. 
The posterior density is then of course proportional to 
$\pi(\gamma,a,b,\sigma)\exp\{\ell_n(\gamma,a,b,\sigma)\}$, 
with $\ell_n$ the log-likelihood function. 
The random vector
$$(S_n,T_{n,1},T_{n,2},U_n)
   =(n(\gamma-\gamma_0),\rootn(a-a_0),
   \rootn(b-b_0),\rootn(\sigma-\sigma_0)) $$
hence has posterior density, say $\pi_n(s,t_1,t_2,u)$, 
proportional to 
$$\pi(\gamma_0+s/n,a_0+t_1/\rootn,b_0+t_2/\rootn,\sigma_0+u/\rootn)
   \exp\{A_n(s,t_1,t_2,u)\}, $$
where $A_n$ is the log-likelihood process function
worked with in Section \ref{section:twowindowsB},
converging in distribution to the $A(s,t_1,t_2,u)$
of (\ref{eq:Aforfour}). As long as the prior is 
positive and continuous on a region containing
the true values, therefore, the posterior density 
converges almost surely to $c\exp\{A(s,t_1,t_2,u)\}$,
with $c$ the appropriate normalising constant. 
Due to the additive structure of $A$ this means that 
$S_n,T_{n,1},T_{n,2},U_n$ are asymptotically independent
in the Bayesian posterior framework. The most 
interesting aspect here is that the posterior 
density of $S_n=n(\gamma-\gamma_0)$ tends to 
$$h(s)\propto\exp\{M(s)/\sigma_0^2\}, $$
with $M(s)$ as in (\ref{eq:Msfirsttime}). 

Taking the mean here, we also learn that 
$$n(\hatt\gamma_B-\gamma_0)\arr_d K=\int sh(s)\,\dd s
   ={\int s\exp\{M(s)/\sigma_0^2\}\,\dd s \over
     \int \exp\{M(s)/\sigma_0^2\}\,\dd s}, $$
 where $\hatt\gamma_B=\E(\gamma\midd\data)$ is the
Bayes estimator under quadratic loss. 
General results of \citet[Chapter 1.9]{IbragimovKhasminski81}
imply that the Bayes estimator $\hatt\gamma_B$
will have smaller limiting mean squared error than 
that of the ML estimator, i.e.~$K$ above has 
smaller squared error than $\hatt s=\argmax(M)$. 
In particular, there is no \Bernstein--von Mises theorem 
here about the ML and the Bayes estimators doing 
equally well for large $n$. The situation
is easier and more standard when it comes to level
parameters and the $\sigma$, however; here the above
efforts indeed lead to such \Bernstein--von Mises 
theorems. In particular, the posterior distribution of
$\rootn(a-\hatt a)$ has the very same limit as has
$\rootn(\hatt a-a_0)$, etc. 

\section{Algorithms for computing estimates}
\label{section:algorithms}

The standard way of maximising the likelihood of break point 
models is via a dynamic programming approach, 
described e.g.~\citet{BaiPerron98}.
We have followed this approach, and implemented the 
computational algorithm in C. We also improved the 
algorithm in two ways. First, we calculated the likelihood 
contributions of placing a jump at a specific regions 
in the inner-loop of the algorithm. 
Most implementations we have seen calculate these 
contributions in a pre-calculated look-up table. 
Such a table is of size $n\choose d$ and requires 
a vast size if $n$ and $d$ are large. In bioinformatics 
applications, both $n$ and $d$ are routinely very large. 
Another improvement we carried out is that the dynamic programming 
routine calculates the accumulated likelihoods 
of all possible jump-configurations. As the dynamic programming 
routine places one jump at the time, we can avoid calculating 
all further nodes if the current node already have 
an accumulated likelihood exceeding a known likelihood 
configuration. If this likelihood configuration 
is close to the maximised likelihood, we reduce the 
computational burden of the full maximisation. We 
propose the following very fast linear time algorithm 
to provide such an initial estimate.

First, let $i=1$, $u_0=1$, $u_2=n$ and
$$u_1 = \argmin_{u_1} \sum_{j = 0}^{1}\sum_{i=u_j}^{u_{j+1}}
   (Y_i-\bar Y_{u_{j}:u_{j+1}})^2. $$
Note that this is a one-dimensional optimisation. 
Iterate the following two steps until $i=n$:

\begin{enumerate}
\item Let $u_{i+1} = n$ and compute
$$\hatt u_i = \argmin_{u_i}\sum_{j = 0}^{i-1} \sum_{i =  u_j}^{u_{j+1}}
    (Y_i - \bar Y_{ u_{j}:u_{j+1}})^2, $$
and set $u_i = \hat u_i$.
\item
Compute
$$v_i = \argmin_{u_i} \sum_{1 \leq j \leq i , j \neq i-1} 
   \sum_{i =  u_j}^{ u_{j+1}} 
   (Y_i - \bar Y_{u_{j}:u_{j+1}})^2, $$
and set $v_i = \hatt v_i$. If $v_i=u_{i-1}$, increase $i$. 
Now go to step 1. Otherwise, set $u_{i-1}=v_i$ and decrease $i$ 
by 1. Now repeat step 2.
\end{enumerate}
Now let $(w_1, \ldots, w_d)$ be the sorted version of 
$(u_1,\ldots,u_d)$. This is the initial value of the full optimisation.

To test the efficiency of this procedure, we ran a simulation 
with $1\times 10^4$ repetitions, each repetition 
having a random sample size $n$ and a random number 
of break points $d$ chosen uniformly between specified limits. 
The jumps were distributed according to $\N(0,1.5^2)$. 
Table \ref{table::alg} shows the savings when using the above 
initial-value algorithm compared to a full ML 
optimisation. Here, the column with `correct breaks' is the percent 
of jump points shared with the full ML 
solution, `correct solution' is the percentage of times 
where the initial and the ML solutions 
were identical, and the computational reduction 
compares the computation time with or without 
the initial value code. It seems clear that we can gain 
much from specifying a good starting solution 
for the optimisation routine.

\begin{table}[htp]
\begin{center}
\begin{tabular}{lllllll}
  \hline
  Sample & Number of & Correct  & Correct & Computational \\
  size   & breaks    & breaks &   solution & reduction \\
  \hline
  $20 \leq n \leq 2000$ & $1 \leq d \leq \sqrt{n}$  & $54 \%$ & $23 \%$ & $66 \%$ \\
  $20 \leq n \leq 2000$ & $1 \leq d \leq \sqrt{n}/2$  & $71 \%$ & $45 \%$ & $50 \%$ \\
  $20 \leq n \leq 4000$ & $1 \leq d \leq \sqrt{n}/2$  & $65 \%$ & $35 \%$ & $52 \%$ \\
  \hline
\end{tabular}
\end{center}
\caption{The effects of the initial-value computation.}
\label{table::alg}
\end{table}







\section{Illustrations} 
\label{section:illustrations}


The following simple illustration is intended as 
a simple `proof of concept', where 
we demonstrate that the change points models 
and the AJIC machinery beats the smooth regression 
models and the model robust AIC of Section \ref{section:smooth}, 
if there are actual change points in the true model
and these are decently separated. With larger
sample size also windows with smaller differences
in levels may be detected too. 

In Figure \ref{figure:AJIC_and_AIC_scores} 
we have plotted $n = 1000$ realisations from 
a model with three change points 
$\gamma=(0.234,0.50,0.73)$ with levels 
$a=(1.0, 3.1, 2.8, 1.5)$ for the four cells.
The $x_i$ have been sampled from the uniform distribution
on the unit interval and the errors are independent 
Gaussian noise with standard deviation 
$\sigma_\true = 0.5$. We will compare the performance 
of seven models: three change points models with 
1--3 break points and four smooth regression models 
represented by polynomial regression from constant 
to cubic, see Table 
\ref{table:AJIC_and_AIC_scores} for the individual 
results. 

\begin{figure}[h!]
\includegraphics[width = \textwidth]{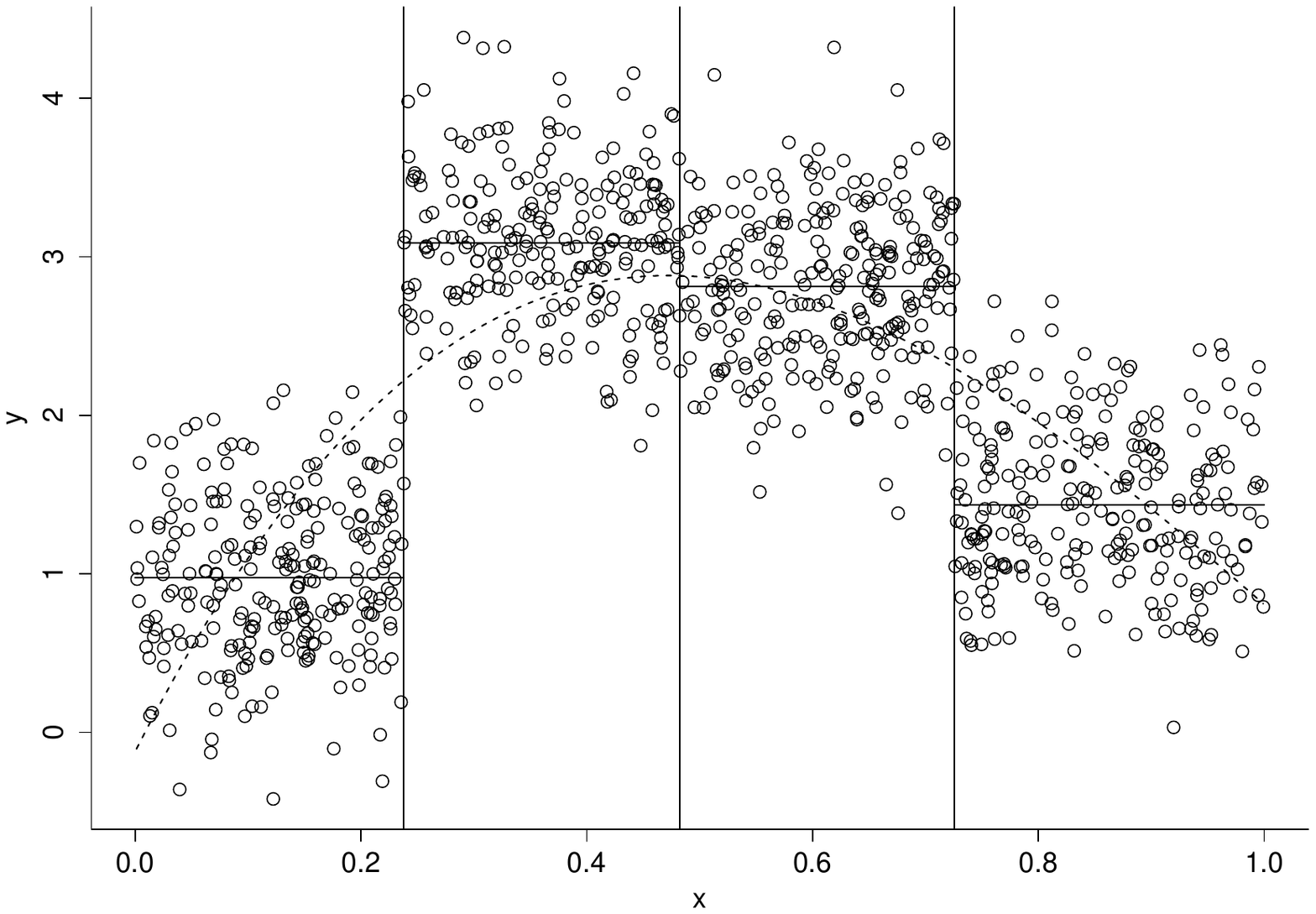}
\caption{
A simulated illustration based on 1000 realisations from 
a true change point model with three break points.
The AJIC winner is the model with three breaks, indicated 
by the vertical lines. The best smooth regression model, 
according to the model robust AIC, is a cubic polynomial 
regression model shown by dashed line above.}
\label{figure:AJIC_and_AIC_scores}
\end{figure}

The algorithm presented in Section \ref{section:algorithms} 
is used to find the break points for the different change 
points models. To estimate the model robust correction 
term $\hatt b$ we need the expectation 
$\E \, W_j^*(\lambda,\hatt s_j)$, 
for $j = 1, \ldots, d - 1$ in each candidate model. 
These expected values do not have closed form expressions. 
A simple practical solution is to evaluating 
these expected values by simulating independent realisations 
of $W_j^*(\lambda,\hatt s_j)$, for each $j$, 
and for each candidate model. For this particular model, 
the estimates seems to be quite stable and 500 realisations 
results in good approximations. Increasing the number of realisations 
in the simulations results in minor improvements, 
i.e.~in the second and third decimal place. 


\begin{table}[h!]
\begin{tabular}{l || r  | r | r r r r r r  r }
  model & AJIC$^{\ast}$ & AIC$^{\ast}$ & AIC & $\ell_{n, \max}$ 
& $\hatt{b}$ & $\hatt{\sigma}_{0}$ & $d$ &  degree & $p$ \\
  \hline                       
  1 & -535.957            & . &  -535.947&-266.062 & 1.917 & 0.791 & 2 & .  & 4\\
  2 &  387.025            & . &   384.540 & 197.883   & 4.371 & 0.497 & 3 & .  & 6\\
  3 &  423.834            & . &   418.901&  217.651  & 5.734 & 0.487 & 4& .  & 8\\
\hline
  constant  & . & -1029.331    &  -1029.887& -513.387  & 1.278 & 1.013 & . & 0 & 2\\
  linear    & . & -1012.365    &  -1012.931& -504.616  & 1.566 & 1.004 & . & 1 & 3\\
  quadratic & . & -293.802     &   -294.966 & -144.155 & 2.746 & 0.701 & . & 2 & 4 \\
  cubic     & . & -270.732     &   -271.925 & -131.981 & 3.385 & 0.692 & . & 3 & 5\\
\end{tabular}
\caption{
Comparison of AJIC and AIC scores for the three change 
points models and four smooth polynomial regression 
models from constant to cubic used in 
Figure \ref{figure:AJIC_and_AIC_scores}. 
The smooth regression models are not even close 
with respect to the change point models, however, 
the difference primarily due to the large gap 
in $\ell_{n, \max}$ values.
}
\label{table:AJIC_and_AIC_scores}
\end{table}

Note that the $\AJIC^*$ and $\AIC^*$ scores, for respectively
the jump models and the smooth models, are constructed
to be directly comparable. In other words, our AJIC machinery
is not merely to compare and rank models with different
number of windows, but to work in concert with the robust
AIC scores for other regression models. 

\begin{remark}
{{\rm 
In the implementation, there is no lower bound 
on the minimum size of the windows. This means 
that if $d_{\max}$  is large, the maximum likelihood 
estimator will typically isolate some single extreme 
points and view these as `thin windows'. 
These observations are potential outliers or 
artifacts of the window search procedure 
and would need to be further examined. For the present 
purpose, however, with focus on model selection, 
a lower bound should be included, since we will 
typically avoid interpreting single points 
or tiny windows with very few points as genuine windows, 
i.e.~not as the effect of real changes in the 
underlying signal. Moreover, the specification 
of a suitable minimum length is also needed 
in order to obtain proper estimates
for $c_{1, j}$ and $c_{2, j}$ in (\ref{eq:cj1cj2}).
This is currently done by using small local averages, which 
implicit assumes a certain minimum window length.
}}
\end{remark}

\section{Concluding remarks}
\label{section:concludingremarks}

We end our paper with a list of concluding notes,
some pointing to further research work of relevance. 

\smallskip

\smallskip
{\sl A. Time series with break points.}
Our basic model has been that of $y_i=m(x_i)+\eps_i$,
with the error terms $\eps_i$ being i.i.d. 
For several application areas these might easily
be dependent, however, and perhaps with the 
$x_i$ signifying either time of spatial context.
One may then extend our methods and results to
such models, e.g.~with and autoregressive structure
for the error terms. Such extensions are possible,
with the required generalisations from the theory
of empirical processes for dependent data. 

\smallskip
{\sl B. Tests for break points.}
Our theory has been geared towards finding good estimates,
along with precision measures and e.g.~confidence
intervals, for given models. The theory may also
be used to develop statistical tests for hypotheses
of the type `$m(x)$ is constant' against 
the alternative that `$m(x)$ has a discontinuity'.
This is not pursued here, however. 

\smallskip
{\sl C. Alternative large-sample setups.}
We have used Assumption \ref{assumption:xxiid}, but
there are other sensible asymptotic approximations as well.
Besides the asymptotic approximations used in the econometric
literature as mentioned in the introduction, there are
settings where deterministic covariates such as $x_i=i/n$ 
are natural. We still have convergence of ergodic averages,
but with different limiting processes.

\bibliographystyle{biometrika}
\bibliography{ghh_bibliography8}


\end{document}